\newtheorem{definition}{Definition}
\newtheorem{proposition}{Proposition}
\newtheorem{corollary}{Corollary}
\title{Low-Density Graph Codes for  Coded Cooperation on Slow Fading Relay Channels}
\author{Dieter~Duyck,~Joseph~J.~Boutros,~and~Marc~Moeneclaey
\thanks{\noindent Dieter Duyck and Marc Moeneclaey wish to acknowledge the activity of the Network of Excellence in Wireless COMmunications NEWCOM++ of the European Commission (contract no. 216715) that motivated this work. The work of Joseph Boutros and part of the work of Dieter Duyck were supported by the Broadband Communications Systems project funded by Qatar Telecom (Qtel)}%
\thanks{Dieter  Duyck  and  Marc Moeneclaey  are  with  the Department  of Telecommunications  and  Information processing,  Ghent
University,   St-Pietersnieuwstraat    41,   B-9000   Gent,   Belgium, \{dduyck,mm\}@telin.ugent.be.}%
\thanks{Joseph J.  Boutros is  with  Texas A\&M University  at Qatar,  PO  Box 23874  Doha, Qatar,  boutros@tamu.edu}
\thanks{\copyright 2009 IEEE. Personal use of this material is permitted. Permission from IEEE must be obtained for all other uses, in any current or future media, including reprinting/republishing this material for advertising or promotional purposes, creating new collective works, for resale or redistribution to servers or lists, or reuse of any copyrighted component of this work in other works.}}
\begin{document}
\maketitle
\begin{abstract}
We study Low-Density Parity-Check (LDPC) codes with iterative decoding on  block-fading (BF)  Relay Channels.   We consider two users that employ  coded  cooperation, a variant of  decode-and-forward with a smaller  outage probability than the   latter. An outage probability analysis for discrete constellations shows that full diversity can be achieved only when the coding rate does not exceed a maximum value that depends on the level of cooperation. We derive a new code structure by extending the previously published full-diversity root-LDPC code, designed for  the BF point-to-point channel, to exhibit   a   rate-compatibility   property which is   necessary   for   coded cooperation.  We estimate the asymptotic performance through a new density evolution analysis and the word  error rate performance  is determined for finite  length codes.  We show  that our  code  construction exhibits near-outage  limit performance for  all block  lengths and  for a range of coding rates  up to 0.5,  which   is  the highest possible coding rate for two cooperating users.
\end{abstract}

\begin{keywords}
Block fading channels, density evolution, low-density parity-check code, mutual information, relay channels.
\end{keywords}

\section{Introduction}
When  communicating  over  fading  channels,  Word  Error  Rate  (WER)
performances  as  well  as  power savings  are  dramatically  improved
through  transmit diversity, i.e.,  transmitting signals  carrying the
same  information over different  paths in  time, frequency  or space.
Recently,  a new  network
protocol called \textit{Cooperative Communication} \cite{cover1979ctr,
sendonaris2003ucd,         sendonaris2003ucd2,        laneman2004cdw,
nosratinia2004ccw}  yields  transmit  diversity  using  single-antenna
devices  in  a  multi-user  environment  by taking  advantage  of  the
broadcast  nature of  wireless  transmission. \\  
The most  elementary example of a  cooperative network is the relay  channel, introduced by
van der Meulen \cite{meulen1971ttc}. In a relay channel, a relay helps
the source in  transmitting its data to a  destination by relaying the
messages  sent  by the  source  so that  the  received  energy at  the
destination is increased.  This relay  channel can be generalized to a
cooperative Multiple  Access Channel (MAC)\cite{laneman2004cdw}, where
two  users  transmitting  data  to  a  single  receiver  cooperate  by
alternately  being the  relay  for  the other  user,  as indicated  in
Fig. \ref{fig: Cooperative MAC}. Further generalization to more users is possible, but this will not be
discussed   here  for   simplicity. \\

\begin{figure}[!hpt]
\centering
\includegraphics[width = 0.4\textwidth]{./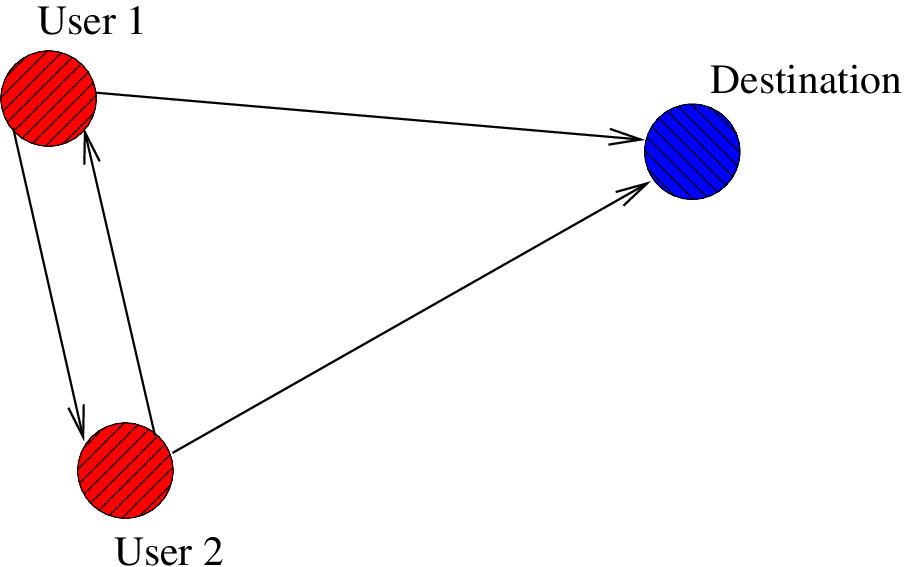}
\caption{A Cooperative Multiple Access Channel (MAC). Arrows between 
two nodes illustrate that both nodes communicate between each other.}
\label{fig: Cooperative MAC}
\end{figure}

A challenging channel  model is the BF\cite{biglieri1998fci} frequency
non-selective  Single-Input  Single-Output  (SISO) channel.  When  the
fading gain  is constant over a  codeword and no  cooperation is used,
the resulting word  error rate curve (displaying the  logarithm of the
error rate versus  the average signal-to-noise ratio (SNR) in dB) has  the same high-SNR slope
as   for    uncoded   transmission:   the    corresponding   diversity
order\footnote{Here, diversity  order is defined  as the ratio  of the
high-SNR slopes of the error  rate curves of the considered system and
of  the uncoded system,  respectively. Alternatively,  diversity order
can be defined as the slope  of the error-rate curve of the considered
system. The diversity  depends on the fading gain  distribution in the
latter definition, but not  in the former definition. Both definitions
are equal in the case of Rayleigh fading.} equals one. The  potential diversity increase  brought by
cooperative techniques allows to save  much transmit energy at a given
error rate. BF channels are  a realistic model
for a  number of channels affected  by slowly varying  fading and flat
fading  is assumed  in  order  to isolate  the  effect of  cooperative
diversity. \\

The  specific task  of  the relay  is  determined by  the strategy  or
protocol. In the case Decode and Forward (DF), the relay
first decodes and then re-encodes the message before sending it to the
destination. A variant of DF is coded cooperation,
where the relay decodes the message received from the source, and then
transmits \textit{additional} parity bits of the message, resulting in
a more spectral efficient strategy \cite{hunter2006dtc}, compared to a
traditional   DF  protocol.   Instead  of   \textit{SNR  accumulation}
(logarithmic rise of mutual information with received power from the relay)   at  the   destination,   we  get   \textit{information
accumulation}  (linear rise of mutual information with received power from the relay)  \cite{zhao2003sna}.  It  has  been  shown  in
\cite{hunter2006oac}       that      the       outage      probability
\cite{biglieri1998fci},  \cite{ozarow1994itc} of coded  cooperation for
half-duplex   BF  channels  is   smaller  than   for  repetition-based
protocols. Moreover, the  concept of coded cooperation can  be used in
more    complex    strategies,    such    as    Amplify-Decode-Forward
\cite{bao2005daf}, where  the relay can  choose between DF and  AF. So
finally, replacing the  decode-and-repeat part in any protocol  by this more
intelligent  ``information  adding''   strategy  improves  the  outage
probability performance. As a consequence, constructing a near-outage channel code for a
coded cooperation  scenario results in  a competitive error-correcting
code in terms of  error-rate performance vs. SNR for
a given rate $R$.\\

Up  till  now, coded  cooperation  has  mainly  been implemented  using
rate-compatible convolutional  codes \cite{hunter2006dtc}. The main drawback of  these codes is that the WER increases
with the logarithm  of block length to the power $d$  where $d$ is the
diversity  order \cite{bou2004tcd}, \cite{bou2005aoc}.  The WER  of practical
near-outage codes should  be independent of the block  length in order
to       approach       the       outage       probability       limit
\cite{fabregas2004ccb}, \cite{fabregas2006cmb}.  The solution is  to use
capacity-achieving codes, for example LDPC codes \cite{richardson2001cld}.  LDPC  codes designed for the special case of
a   cooperative   channel   have   been   reported for   the  Gaussian   channel by   Razaghi and Yu \cite{razaghi2007bld}, \cite{razaghi2007bic} and by 
Chakrabarti et al. \cite{chak2007ldp}.   For  the
block-fading channel however,  there is still a lack  of a near-outage
LDPC code. Hu et al. \cite{hu2007ldp} also designed LDPC codes for the Gaussian relay channel, whereafter they applied this random LDPC code to a BF relay channel. Unfortunately, a random code does not perform very well on a BF relay channel, because it has not the structure to  achieve  full diversity,  as  shown  by  Boutros  et al. \cite{boutros2007daa} and as will  be explained in the rest of the paper.

In Section \ref{sec: outage prob anal}, this paper analyzes the outage probability for binary phase shift keying (BPSK) modulations and derives a coding rate limitation that is necessary for the protocol to have diversity two, valid for all discrete alphabets. Deriving a code structure for coded cooperation will be treated in the second part of the paper. The  aim of coded  cooperation is to  send a codeword  over two independent   fading   paths and the relay must be able to decode after receiving the first part of the codeword. An error-correcting code must therefore exhibit two properties: full-diversity and rate-compatibility. This paper derives a new code structure satisfying both properties. Often \cite{duy2009fjncc, duyck2010aac, duy2011tfdj}, perfect source-relay channels are assumed when designing error-correcting codes. These codes can be extended immediately to codes for cooperative systems with non-perfect source-relay channels using the proposed rate-compatible structure from this paper. 
We also determine density evolution equations to 
obtain a lower bound on the WER of the LDPC ensemble. The density evolution analysis can also be used to optimize the degree distributions, which will be discussed briefly, but this is not the topic of the paper.\\

Channel-State Information (CSI) is assumed at the decoder. We consider
half-duplex  devices,  assuming   that  simultaneously  receiving  and
transmitting data in the same frequency-band is too complicated due to
the   limited  isolation  of   directional  couplers.   In addition, we also  restrict the protocol to be  orthogonal since we
transmit at low  rates (we use Binary Phase-Shift  Keying (BPSK)). The
proposed code  construction can nevertheless  be used in  more complex
non-orthogonal protocols,  where one can  achieve more coding  gain in
high-rate  scenarios\cite{aza2005ota}.

\section{System model and notation}
\label{sec: System model and notation}
As  mentioned in  the introduction,  the devices  are  half-duplex and
users transmit  in non-overlapping  time slots. The transmission  of a
codeword is organized in two  frames which constitute one block. We denote the transmission of user $u$, $u=1,2$, in frame $m$, $m=1,2$, by $X_{u,m}$. The pair $(C_{u,1}, C_{u,2})$ denotes the codeword of user $u$. In
the first frame of a block, each user broadcasts the first part of its encoded data to the other
user  and  to the  destination.  In  the  second frame,  users  either
cooperate  or  send  additional parity  bits  related to  their  own  information
message, depending  on whether they are able to decode the transmissions in  the first
frame. The  decoding failure is  detected by the relaying user via a Cyclic  Redundancy Check
(CRC)  code or  any other  intelligent detection  scheme. There  are 4
cases   to  be   distinguished,  as   summarized  in   Fig.  \ref{fig:
4_cases}: in case 1, both users have successfully decoded the information from the other user; in case 2, none of the users has been able to decode the information from the other user; in case 3 (case 4), only user 2 (user 1) has successfully decoded the information from the other user. Methods are known allowing the destination to detect which of these 4 cases has occurred \cite{hunter2004cc}.

\begin{figure*}[!htb]
   \centering
   \subfigure[Case 1. Both interuser transmissions are successfully decoded. Each user cooperates in the second frame.]
   {\includegraphics[width = 0.45\textwidth]{./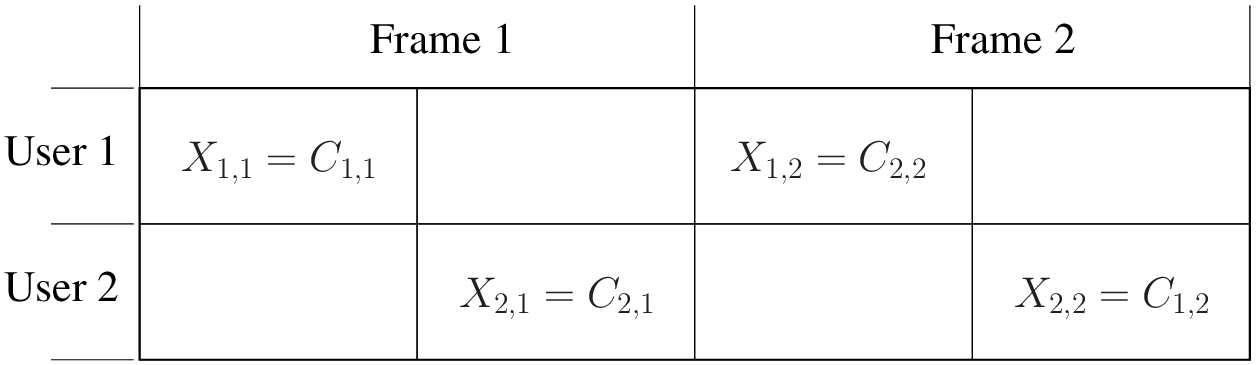}}
   \qquad \subfigure[Case 2. Both interuser communications failed. Each user sends its own parity bits in the second frame.]{\includegraphics[width =0.45\textwidth]
   {./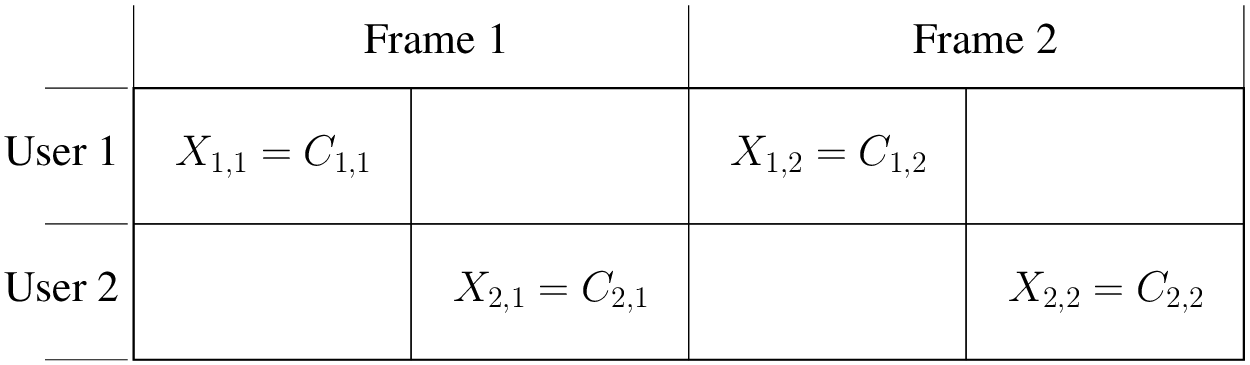}}
	\qquad \subfigure[Case 3. User2-to-User1 communication failed. In the second frame, user 1 sends its own parity bits and user 2 cooperates with user 1.]{\includegraphics[width = 0.45\textwidth]
   {./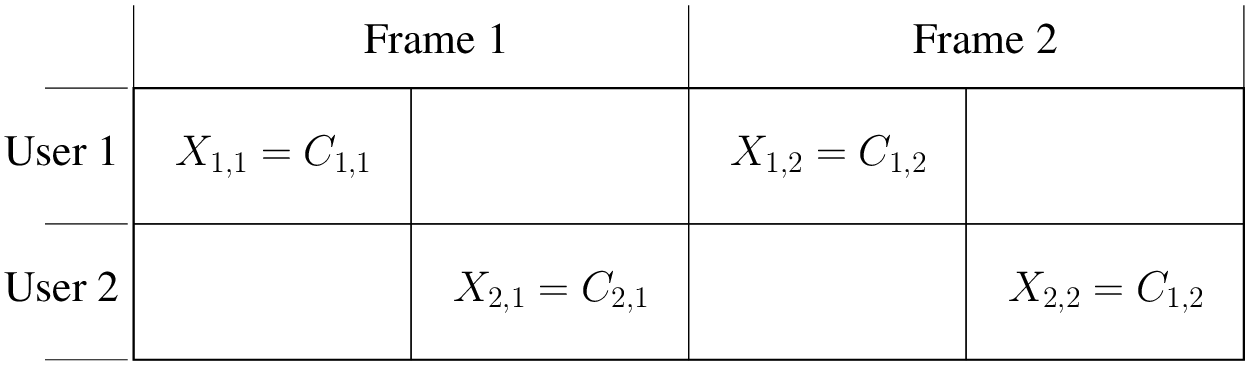}}
	\qquad \subfigure[Case 4. User1-to-User2 communicatino failed. In the second frame, user 2 sends its own parity bits and user 1 cooperates with user 2.]{\includegraphics[width = 0.45\textwidth]
   {./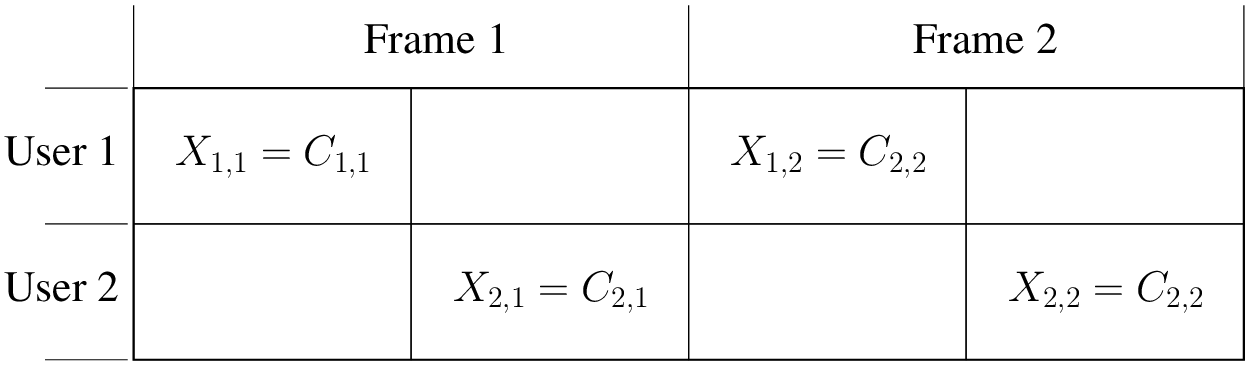}}
		\caption{The 4 cases encountered in coded cooperation are listed above.}
   \label{fig: 4_cases}
\end{figure*}

A codeword will consequently be split over 2 frames. We consider codewords to have a total length equal to $N$ binary digits, where $N=N_1+N_{2}$, and $N_1$ and $N_2$ denote the length of the first and second part of the codeword.
We define the level of cooperation, $\beta$, as the ratio $N_{2}/N$. \\
We denote the transmitter of a frame, which can be user 1 or user 2, by $s$ and the receiver of a frame, which can be user 1, user 2 or the destination, by $r$. Transmitted symbols of user 1 will be denoted $x_{1}[i]$ where $i$ is the symbol time index, $i \in \{1, \ldots, N \}$. Similarly, transmitted symbols
of user 2 are denoted $x_{2}[i]$. The transmitted symbols are chosen from a BPSK alphabet, $x_{s}[i] \in \{ 1, -1\}$. Received symbols will be denoted $y_{sr}[i]$ for received symbols from transmitter $s$ to receiver $r$. The received symbol is given by

\begin{equation}
	\label{eq: channel}
	y_{sr}[i] = \alpha_{sr} x_{s}[i] + z_{r}[i] ,
\end{equation}

\noindent where $z_{r}[i] \sim \mathcal{N}(0,\sigma^2)$ are independent noise samples and $\alpha_{sr} \in \mathbb{R}^+$ is the Rayleigh distributed fading gain between sender $s$ and receiver $r$, with normalized second order moment, $\mathbb{E}[\alpha_{sr}^2]=1$. The fading coefficient $\alpha_{sr}$ is assumed to be constant during 2 frames. Note that this channel model is memoryless \cite{cover2006eit} and satisfies the channel symmetry condition, $p(y_{sr}[i]|\alpha_{sr}, x_{s}[i]=1)=p(-y_{sr}[i]|\alpha_{sr}, x_{s}[i]=-1)$. Each terminal is transmitting at a constant enery per symbol $E_s$, which is related to the energy per information bit $E_b$ by $E_s = R_c E_b$ (BPSK). The total energy per information bit-to-noise ratio is specified by $E_b/N_0$.\\
 
We focus on binary LDPC codes $\mathcal{C}[N,K]_2$ with block length $N$, dimension $K$, and coding rate $R_c = K/N$. Regular LDPC ensembles are characterized by the pair $(d_b, d_c)$, where $d_b$ is the maximum bitnode degree and $d_c$ is the maximum checknode degree. Irregularity is introduced through the standard polynomials $\lambda(x)$ and $\rho(x)$ \cite{richardson2001dca}:
\begin{equation*}
	\lambda(x) = \sum_{i = 2}^{d_{b}}{\lambda_i x^{i-1}} ,~~~~	\rho(x) = \sum_{i=2}^{d_{c}}{\rho_i x^{i-1}}.
\end{equation*}
where $\lambda(x)$ and $\rho(x)$ are the left and right degree distributions from an edge perspective. In Section \ref{sec: density evolution} the polynomials $\mathring{\lambda}(x)$ and $\mathring{\rho}(x)$, which are the left and right distributions from a node perspective, will also be adopted:
\begin{equation*}
	\mathring{\lambda}(x) = \sum_{i = 2}^{d_{b}}{\mathring{\lambda}_i x^{i-1}}, ~~~~ \mathring{\rho}(x) = \sum_{i = 2}^{d_{c}}{\mathring{\rho}_i x^{i-1}}.
\end{equation*}

In this paper, not all bit nodes and check nodes in the Tanner graph will be treated equally. To elucidate the different classes of bit nodes and check nodes, a compact representation of the Tanner graph, adopted from \cite{bou2007dp} and also known as protograph representation \cite{thorpe2003lcp}, will be used. In this compact Tanner graph, bit nodes and check nodes of the same class are merged into one node. \\

\begin{definition}
The diversity order attained by a code $\mathcal{C}$ is defined as \begin{displaymath} d = -\lim_{\gamma \rightarrow \infty} \frac{\log P_e}{\log \gamma}, \end{displaymath} where $P_e$ is the word error rate after decoding.
\end{definition}

\begin{definition} An error-correcting code is said to have \textit{full diversity} if $d=N_u$, where $N_u$ is the number of cooperating users.
\end{definition}

Notice that the above definition assumes Rayleigh distributed single antenna channels. According to the blockwise Singleton bound \cite{knopp2000cbf}, \cite{malkamaki1999epc}, the coding rate for an $n$-order full-diversity code is upper bounded by $R_{cmax} = 1/n$. Hence, in a 2-user scenario we get $R_c \leq 0.5$.

\section{Outage Probability Analysis}
\label{sec: outage prob anal}

The word error rate of practical systems is, in the limit of large block length, lower bounded by the \textit{information outage probability}
\begin{equation*}
	P_{out} = P\big(I(\alpha, \gamma) < R \big) ,
\end{equation*}
where $I(\alpha, \gamma)$ is the instantaneous mutual information as a function of a certain fading gain $\alpha$ and average SNR $\gamma$, $\gamma=\frac{E_s}{N_0}=\frac{1}{2 \sigma^2}$, where $E_s$ is the symbol energy. This definition remains valid for a channel model as described in (\ref{eq: channel}), but then $\alpha$ is the set of fading gains over a codeword and $\gamma$ is the set of average received SNRs. The rate $R$ is the spectral efficiency of a user, only taking into account its timeslots, hence not the average spectral efficiency\footnote{This is, in our opinion, necessary for a fair comparison between multiple user networks with a different number of users.}. The diversity order of the outage probability limit is the same as the order attained by a full-diversity channel code\cite{fabregas2004ccb}. It is our aim in this paper to approach the outage probability limit for a range of values of the spectral efficiency $R$. Since we use BPSK signaling, the spectral efficiency $R$ is identical to $R_c$.  \\

The outage probability analysis of coded cooperation with a Gaussian alphabet has been made in \cite{hunter2006oac}. Here, the analysis considers BPSK
signaling, leading to an important conclusion in Corollary \ref{cor: outage} at the end of this section. The stated corollary is also valid for larger discrete alphabets.\\
The average mutual information of a SISO channel with received signal $y$, conditioned on the channel realization $\alpha$, is determined by the following well-known formula\cite{ungerboeck1982ccm}:

\begin{equation}
	I(X;Y|\alpha) = 1-\mathbb{E}_{Y|\alpha}\left\{\log_2\left(1+\exp\left[\frac{-2 y \alpha}{\sigma^2}\right]\right)\right\},
	\label{eqn: mut info BPSK}
\end{equation}

\noindent where $\mathbb{E}_{Y|\alpha}$ is the mathematical expectation over $Y$ given $\alpha$. The outage event of a point-to-point link is defined by the mutual information of that link being less than its transmission rate. The outage event $E_o$ of the relay channel is determined by a specific region in the multidimensional space of instantaneous signal-to-noise ratios. Next, we give the exact definition of $E_o$ for coded cooperation with BPSK modulation. We shorten the notation $I(X_i;Y_j|\alpha_{ij})$ to $I_{ij}$.\\

\begin{proposition} \label{prop: outage event for coded cooperation}
	In coded cooperation for a two-user MAC with BPSK signaling, the outage event $E_{o}$ related to user 1 is expressed as follows:
{\footnotesize
\begin{eqnarray*}
  E_{o} &\stackrel{a)}{=}& \left[\left(I_{12} > \frac{R}{1-\beta}\right) \cap  \left(I_{21} > \frac{R}{1-\beta}\right) \cap \left(I_{1d}\left(1\right)<R\right)\right] \nonumber \\
  &\cup& \left[\left(I_{12} < \frac{R}{1-\beta}\right) \cap  \left(I_{21} < \frac{R}{1-\beta}\right) \cap \left(I_{1d}\left(2\right)<R\right)\right] \nonumber \\
  &\cup& \left[\left(I_{12} > \frac{R}{1-\beta}\right) \cap   \left(I_{21} < \frac{R}{1-\beta}\right) \cap \left(I_{1d}\left(3\right)<R\right)\right] \nonumber \\
 &\cup& \left[\left(I_{12} < \frac{R}{1-\beta}\right) \cap   \left(I_{21} > \frac{R}{1-\beta}\right) \cap \left(I_{1d}\left(4\right)<\frac{R}{1-\beta}\right)\right],
\end{eqnarray*}
}where 
{ \small
\begin{eqnarray}
	I_{12} &\stackrel{b)}{=}& 1-\mathbb{E}_{Y|\alpha_{12}} \left\{ \log_2\left( 1+\exp\left[  \frac{-2 y_{12} \alpha_{12}}{\sigma_{12}^2}  \right] \right) \right\} , 
	\label{eq: I12}\\ 
	I_{21} &\stackrel{b)}{=}& 1-\mathbb{E}_{Y|\alpha_{21}} \left\{ \log_2\left( 1+\exp\left[  \frac{-2 y_{21} \alpha_{21}}{\sigma_{21}^2}  \right] \right) \right\},
	\label{eq: I21}
\end{eqnarray}}
and where $I_{1d}(1)$ is $I_{1d}$ in case $i$. For each of the cases considered
in Fig. \ref{fig: 4_cases}, the mutual information $I_{1d}$ can be calculated as follows: 
{\footnotesize
\begin{align}
\intertext{\textbf{Case 1:}}
	\label{eq: I_1d_case1} I_{1d}(1) &\stackrel{c)}{=} 1-(1-\beta) \ \mathbb{E7}_{Y|\alpha_{1d}} \left\{ \log_2\left( 1+\exp\left[  \frac{-2 y_{1d} \alpha_{1d}}{\sigma_{1d}^2}  \right] \right) \right\} 
		\nonumber \\
		& - \beta \ \mathbb{E}_{Y|\alpha_{2d}} \left\{ \log_2\left( 1+\exp\left[  \frac{-2 y_{2d} \alpha_{2d}}{\sigma_{2d}^2}  \right] \right) \right\}. \\
\intertext{\textbf{Case 2:}}
	\label{eq: I_1d_case2} I_{1d}(2) &\stackrel{c)}{=} 1-\mathbb{E}_{Y|\alpha_{1d}} \left\{ \log_2\left( 1+\exp\left[  \frac{-2 y_{1d} \alpha_{1d}}{\sigma_{1d}^2}  \right] \right) \right\}. \\
\intertext{\textbf{Case 3:}}
	\label{eq: out_prob_anal_case3}  I_{1d}(3) &\stackrel{c)}{=} 1- (1-\beta) \ \mathbb{E}_{Y|\alpha_{1d}} \left\{ \log_2\left( 1+\exp\left[  \frac{-2 y_{1d} \alpha_{1d}}{\sigma_{1d}^2}  \right] \right) \right\} \nonumber \\
		& - \beta \ \mathbb{E}_{Y'|{\alpha_{1d}\alpha_{2d}}}\left\{\log_2\left(1+\exp\left[\frac{-2 (y') (\alpha_{1d}^2 + \alpha_{2d}^2)^{3/2}}{\sigma_{1d}^2  \alpha_{1d}^2 + \sigma_{2d}^2 \alpha_{2d}^2}\right]\right)\right\}, \\
	y' &= \frac{(\alpha_{1d}  y_{1d}+\alpha_{2d} y_{2d})}{\sqrt{ \alpha_{1d}^2 + \alpha_{2d}^2}}. \nonumber \\
\intertext{\textbf{Case 4:}}
	\label{eq: I_1d_case4} I_{1d}(4) &\stackrel{c)}{=} 1-\mathbb{E}_{Y|\alpha_{1d}} \left\{ \log_2\left( 1+\exp\left[  \frac{-2 y_{1d} \alpha_{1d}}{\sigma_{1d}^2}  \right] \right) \right\}. 
\end{align}}
\end{proposition}
\vspace{\headheight}
\begin{proof}
\begin{description}
\item[a)] is the union of four events associated to the four cases considered in Fig. \ref{fig: 4_cases}. Each case in $E_{o}$ involves the intersection with an outage event where the mutual information between a user and the destination is below the rate $R$, except for case 4, where only the first frame is dedicated to user 1. 
\item[b)] follows directly from (\ref{eqn: mut info BPSK}). 
\item[c)] uses the fact that the two frames in a block behave as parallel Gaussian channels whose
capacities add together. Of course, both frames timeshare a time-interval, which gives a
weight to each capacity term \cite[Section 9.4]{cover2006eit}, \cite[Section 5.4.4]{tse2005fwc}. 
\item[(\ref{eq: out_prob_anal_case3})] follows from maximum ratio combining \cite{tse2005fwc} at the destination during the second frame. 
\end{description}
\end{proof}

\noindent 
The  outage   probability  is   obtained  by  integrating   the  joint
probability   distribution   $p(\alpha_{12},\alpha_{21},  \alpha_{1d},
\alpha_{2d})$  over the  volume  defined  by $E_o$:
\begin{equation*}
	P_{out} = \iiint_{E_o} p(\alpha_{12},\alpha_{21}, \alpha_{1d},\alpha_{2d}) ~\mathrm{d}\alpha_{12} \mathrm{d}\alpha_{21}\mathrm{d}\alpha_{1d} \mathrm{d}\alpha_{2d}.
\end{equation*}
Just  as for  the Gaussian modulation, there is  only one free parameter $\beta$ because $R$  and   $\gamma$  are  fixed   by  the  protocol  and   the  physical environment. Hence, given  R and $\gamma$, one can  optimize the value of $\beta$. For example, notice  that for a low-SNR interuser channel,
the  outage probability  improves  while taking  $\beta$ smaller  than $0.5$ due to  the enhanced protection of the  source-relay channel. On the  other  hand,  a  $\beta$  smaller than  $0.5$  results  in  lower achievable coding rates, as proved in Corollary \ref{cor: outage}. The optimization of $\beta$, as already undertaken in \cite{hunter2006oac} for Gaussian modulations, is not within the subject of this paper.

\noindent There is an important conclusion to draw from the analysis of Prop. \ref{prop: outage event for coded cooperation}:
\begin{corollary} \label{cor: outage}
In coded cooperation over a block-fading channel for the 2-user MAC with a cooperation level $\beta$, transmitting at a coding rate greater than $\textrm{min}(\beta, (1-\beta))$ renders a single order diversity. 
\end{corollary}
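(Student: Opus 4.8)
\emph{Proof sketch.} Everything would rest on Proposition~\ref{prop: outage event for coded cooperation}, which identifies $P_{out}$ with $P(E_o)$ for the disjoint four-term event $E_o$, one term per case of Fig.~\ref{fig: 4_cases}. Since $d=-\lim_{\gamma\to\infty}\log P(E_o)/\log\gamma$, the diversity order equals the slowest decay exponent among the four terms, so the plan is: (a) for the upper bound $d\le 1$, to exhibit within $E_o$ a sub-event that requires only a \emph{single} fading gain to be anomalously small, hence of probability $\Theta(1/\gamma)$; (b) for the matching $d\ge 1$, to confine each of the four terms inside an event of the shape $\{\alpha_{1d}^2<c/\gamma\}\cup\{\alpha_{2d}^2<c/\gamma\}$, hence $O(1/\gamma)$.

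The only analytic input is a normalization remark on the BPSK mutual information~(\ref{eqn: mut info BPSK}): the substitution $z_r=\sigma_{sr}z'$, $z'\sim\mathcal N(0,1)$, shows that $I(X;Y|\alpha_{sr})=f(\alpha_{sr}^2/\sigma_{sr}^2)$ with $\alpha_{sr}^2/\sigma_{sr}^2\propto\gamma\,\alpha_{sr}^2$, where $f$ is continuous and strictly increasing, $f(0)=0$, $f(+\infty)=1$. I would extract two consequences: (i) for a fixed $\alpha_{sr}>0$, $I(X;Y|\alpha_{sr})\to 1$ as $\gamma\to\infty$; (ii) for a fixed $\tau\in(0,1)$, $\{I(X;Y|\alpha_{sr})<\tau\}=\{\alpha_{sr}^2<f^{-1}(\tau)/(2\gamma)\}$, of Rayleigh probability $\Theta(1/\gamma)$. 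The same computation applied to the maximum-ratio-combining term of case~3 in~(\ref{eq: out_prob_anal_case3}) (for a common destination noise level; unequal levels are handled identically) rewrites it as $f\big(2\gamma(\alpha_{1d}^2+\alpha_{2d}^2)\big)$.

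For (a) I would split the hypothesis $R>\min(\beta,1-\beta)$ (a non-trivial code forces $R<1$) into the two exhaustive branches $R\ge 1-\beta$ and $\beta<R<1-\beta$. In the first branch $R/(1-\beta)\ge 1$ while $I_{12},I_{21}\le 1$, so almost surely the three terms of $E_o$ that require $I_{12}>R/(1-\beta)$ or $I_{21}>R/(1-\beta)$ vanish, and $E_o$ collapses to the case-2 term $\{I_{1d}(\textrm{case 2})<R\}=\{\alpha_{1d}^2<f^{-1}(R)/(2\gamma)\}$ --- the protocol has degenerated to direct source-to-destination transmission --- of probability $\Theta(1/\gamma)$. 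In the second branch I would keep the case-1 term instead: its requirements $I_{12},I_{21}>R/(1-\beta)$ hold with probability bounded away from $0$ since $R/(1-\beta)<1$ (take $\alpha_{12},\alpha_{21},\alpha_{2d}$ all bounded below by a constant), whence by~(i) $I(\alpha_{2d})\to 1$, so for large $\gamma$ the case-1 outage $I_{1d}(\textrm{case 1})=(1-\beta)I(\alpha_{1d})+\beta I(\alpha_{2d})<R$ occurs as soon as $I(\alpha_{1d})<(R-\beta)/(1-\beta)$, a threshold strictly positive \emph{precisely because} $R>\beta$; by~(ii) this is an $\alpha_{1d}$-event of probability $\Theta(1/\gamma)$, and independence of $\alpha_{12},\alpha_{21},\alpha_{2d},\alpha_{1d}$ gives $P(E_o)\ge K/\gamma$. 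Either branch yields $d\le 1$. For (b): in each case the governing $I_{1d}$ is nondecreasing in $2\gamma\alpha_{1d}^2$ (and, where present, in $2\gamma\alpha_{2d}^2$ or $2\gamma(\alpha_{1d}^2+\alpha_{2d}^2)$) and exceeds $R$ once those arguments are bounded below, so a single constant $c=c(R,\beta)$ with $f(2c)>R$ confines every term to $\{\alpha_{1d}^2<c/\gamma\}\cup\{\alpha_{2d}^2<c/\gamma\}$, and Rayleigh fading gives $P(E_o)=O(1/\gamma)$. Combining (a) and (b), $d=1$.

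The genuine difficulty is combinatorial rather than computational: one must track the position of $R$ relative to \emph{both} $\beta$ and $1-\beta$ and see, branch by branch, which of the four terms in~(\ref{eq: outage event}) degenerates to a single-small-gain event --- in particular that $R>1-\beta$ switches off the inter-user links entirely, while $\beta<R<1-\beta$ merely strips the relayed (case-1) branch of its second-frame diversity. The normalization $I=f(\alpha^2/\sigma^2)$ is elementary but indispensable, and is the only place the finite-alphabet hypothesis is used; since any finite constellation $\mathcal X$ produces an $I$ with the same monotonicity, $I(0)=0$ and $I(+\infty)=\log_2|\mathcal X|>R$, both the argument and the corollary carry over verbatim to larger discrete alphabets, as claimed.
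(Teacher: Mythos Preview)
Your argument is correct, but it proceeds quite differently from the paper's own proof. The paper does not compute with Rayleigh tails at all: it passes to the Block-Erasure Channel (BEC), where each $\alpha_{sr}\in\{0,+\infty\}$ and each link capacity is $0$ or $1$, so that in case~1 the destination mutual information takes only the four values $\{0,\beta,1-\beta,1\}$. The paper then observes that with a single ``dead'' link (say $\alpha_{1d}=0$, $\alpha_{2d}=+\infty$) one gets $I_{1d}=\beta$, hence $R>\beta$ forces outage from one failure; the symmetric configuration gives the constraint $R\le 1-\beta$. This BEC reduction is short and makes the combinatorics transparent, but it leans on the heuristic that the BEC is ``extremal'' for block fading and never explicitly establishes the matching bound $d\ge 1$.

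Your route---normalizing $I=f(2\gamma\alpha^2)$ and reading off $\Theta(1/\gamma)$ and $O(1/\gamma)$ directly from~(\ref{eq: outage event})---is exactly the rigorous Rayleigh computation that the paper's BEC heuristic is a shorthand for. Your case split ($R\ge 1-\beta$ collapses $E_o$ to the case-2 term; $\beta<R<1-\beta$ exposes a single-$\alpha_{1d}$ sub-event inside the case-1 term) is the continuous analogue of the paper's two erasure patterns. Two small points worth tightening: in part~(b), case~4 compares $I_{1d}$ against $R/(1-\beta)$ rather than $R$, so your constant must satisfy $f(2c)>R/(1-\beta)$ when $R<1-\beta$ (and the term is vacuous otherwise, as you already noted for branch~1); and your lower-bound argument implicitly uses $I(\alpha_{2d})\le 1$ rather than $I(\alpha_{2d})\to 1$, which is in fact what makes the inequality $I_{1d}(\text{case 1})\le(1-\beta)I(\alpha_{1d})+\beta$ work cleanly. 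Neither affects correctness. The payoff of your approach is a self-contained proof that yields both $d\le 1$ and $d\ge 1$ and extends verbatim to larger constellations; the paper's approach is quicker and more intuitive but leaves the Rayleigh-to-BEC reduction implicit.
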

\begin{proof}
A necessary condition for coded cooperation to achieve full diversity over a block-fading channel, is that it achieves full diversity over a Block Erasure  Channel (BEC)\cite{lapidoth1994pcc}, because a BEC  is an  extremal case  of a  block-fading channel. We will show that this condition is not satisfied for coding rates greater than $\textrm{min}(\beta, (1-\beta))$.
In a  BEC, the fading gain $\alpha$  takes two possible values $\{  0, +\infty\}$. An outage event on a point-to-point channel is defined by the fading gain $\alpha$ being zero. As a  consequence, the possible values of the BPSK capacity on a BEC are confined to zero  or one. Hence, for  the two-user MAC,  the mutual information $I_{1d}$ related to case 1  belongs to $\{1, \beta, (1-\beta), 0\}$. A double diversity order is equivalent to stating that two outage events are  necessary to  lose the  transmitted codeword.  Take  the scenario where the  user1-to-destination channel has  fading gain zero  and the user2-to-destination  channel  has   fading  gain  $\infty$.  In  this scenario,  the mutual information  $I_{1d}$ is  equal to  $\beta$. All coding rates higher than $\beta$ will limit the diversity order of the outage probability to one, since  only one channel in outage is enough to lose  the codeword. From a  similar reasoning, it is shown that $R_c$ must be smaller than $(1-\beta)$. This corollary is also valid for
signaling strategies with $M$ constellation points.
\end{proof}
\vspace{\headheight}

In the  sequel, if  not otherwise  stated, we assume  a rate  equal to $R_c=\frac{1}{3}$. From  Corollary \ref{cor:  outage}, we know that the level of cooperation must at least belong to $\beta \in  [\frac{1}{3},~ \frac{2}{3}]$. We  stress on the  fact that the proposed code construction is  very flexible in parameters such as the block length  and the coding rate. We  will use $\beta=0.5$ throughout this  paper,  which allows the broadest range of coding rates according to Corollary \ref{cor: outage}. We illustrate this  in the numerical results by showing the  WER performance  of an  LDPC  code whose  coding rate $R_c$  approaches $1/2$.

\section{Full-diversity Low-Density coding for coded cooperation}
\label{sec: RC LDPC}
Codewords in coded cooperation are split over 2 frames. The first part of  a codeword,  transmitted  during the  first  frame should  protect information on the noisy source-relay channel. Consequently, a channel
code,  compatible  with  two  distinct  rates is  to  be  devised.  In
non-cooperative   communications,   this    property   is   known   as
rate-compatibility where parity bits of higher rate codes are embedded
in those of lower rate codes \cite{hagenauer1988rcp}. The  advantage is  that all
codes can  be encoded/decoded using a single  encoder/decoder.\\ 
Rate-compatibility in the context of LDPC codes was first
introduced    by   Li    et   al.    \cite{li2002rcl}   and    Ha   et
al.   \cite{ha2004rcp}   and  further   elaborated   for  example   in
\cite{yazdani2004irc}.  Two  techniques   have been  used:  puncturing  and
extending.  A  fraction of  parity  bits of  a  mother  code could  be
punctured  to obtain higher  rate codes.  However, the  resulting rate
range is limited because the deletion  of too many bits has a negative
effect on decoding  via belief propagation.  To obtain  a more dynamic
range  in  rates,  the  technique  of extending  has  been  used.  The
extension  is made  by  adding  extra parity  bits  as illustrated  in
Fig. \ref{fig: parity1}, where the overall code is the intersection of
two constituent codes defined by  $H_2$ and $H_1$ padded with zeros on
the right.

\begin{figure}[!ht]
   \centering
   {\includegraphics[width = 0.35 \textwidth]{./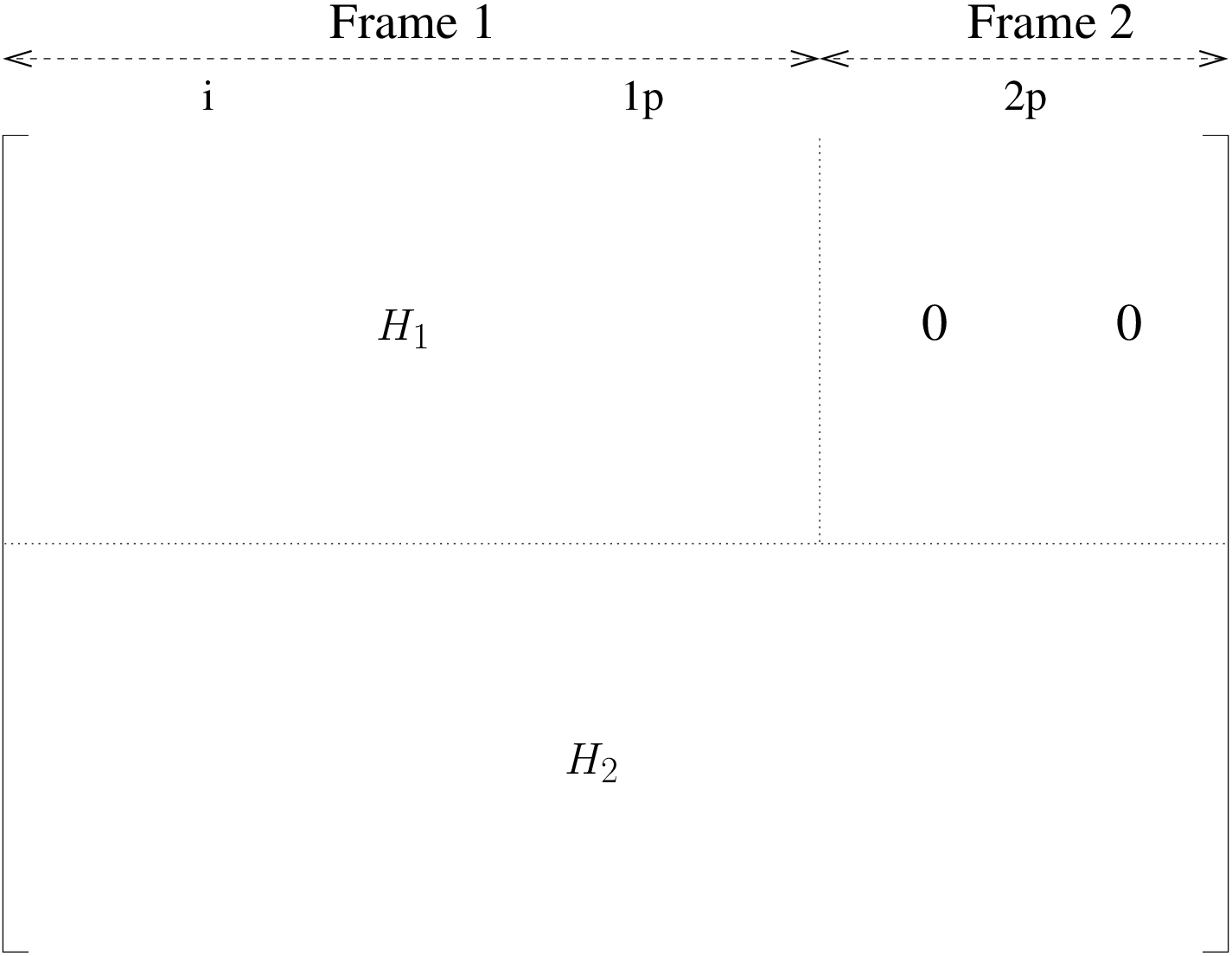}}
   \caption{Parity-check matrix of a rate-compatible LDPC code obtained by the extension of higher rate codes. Symbols are split into three classes: $i$ for the information bits, $1p$ and $2p$ for two classes of parity bits. The classes $i$ and $1p$ are transmitted by the source in frame 1. Parity bits $2p$ are transmitted in the second frame, for example by the relay after successful decoding of the first frame.}
	\label{fig: parity1}
\end{figure}
For simplicity, we only used the technique of extending to acquire rate-compatibility, but this may be further optimized by combining puncturing
and extending via known techniques \cite{li2002rcl, ha2004rcp, yazdani2004irc}.

\subsection{Full-diversity LDPC codes}
\label{sec: full-diversity}

In coded cooperation, 4 cases occur depending on the success of the transmission
in the first frame. In each of the cases, the destination has other log-likelihood
ratios at the input of the decoder. In the following proposition, we will show
that it is sufficient to guarantee that the decoder at the destination achieves
full diversity in case 1.

\begin{proposition}
	\label{case1: FD}
	In coded cooperation on a cooperative MAC, a code $\mathcal{C}$ attains full diversity, if and only if full diversity is attained in case 1. 
\end{proposition}
\begin{proof}
The WER after decoding $P_e$ can be split as follows
\begin{equation}
	P_e = \sum_{i=1}^4 P(\textrm{case i})P(e| \textrm{case i}).
\end{equation}
The probability that a certain case occurs, depends on the success of decoding two point-to-point channels, so that it is easy to derive that:

\begin{eqnarray}
	P(\textrm{case 1}) &=& (1 - \frac{c}{\gamma})(1 - \frac{c}{\gamma}) \\
	P(\textrm{case 2}) &=& (\frac{c}{\gamma})(\frac{c}{\gamma}) \\	
	P(\textrm{case 3}) &=& (1 - \frac{c}{\gamma})(\frac{c}{\gamma}) \\
	P(\textrm{case 4}) &=& (\frac{c}{\gamma})(1-\frac{c}{\gamma}),
\end{eqnarray}
where $c$ is a positive constant. 
To have $P_e \propto \frac{1}{\gamma^2}$, the following conditions apply:
\begin{eqnarray}
	P(e| \textrm{case 1}) &\propto& \frac{1}{\gamma^2} \label{cond1},\\
	P(e|\textrm{case 2}) &\propto& 1 \label{cond2},\\	
	P(e| \textrm{case 3}) &\propto& \frac{1}{\gamma} \label{cond3},\\
	P(e|\textrm{case 4}) &\propto& \frac{1}{\gamma} \label{cond4}.
\end{eqnarray}
Eqs. (\ref{cond2}), (\ref{cond3}) and (\ref{cond4}) are automatically satisfied, so that the only nessecary and sufficient condition is (\ref{cond1}). 
\end{proof}

Due to Proposition \ref{case1: FD}, we will assume in the following analysis the occurrence of case 1 where the transmission on the interuser channel in the first frame has been successful and both users are cooperating in the second frame. Full-diversity coding on a relay channel must cope with block erasures. Consider the coding  structure plotted in  Fig. \ref{fig:  parity1}. If  all parity bits $2p$ are  erased due to deep fading in frame  2, then the decoder should be capable to retrieve information bits $i$ thanks to $H_1$ and possibly  recompute $2p$  thanks to  $H_2$. Unfortunately,  under deep fading in  frame 1,  a structure with a randomly generated $H_2$, as in Fig.  \ref{fig: parity1}, cannot  guarantee  the  retrieval  of  the  information  bits  through $H_2$. The aim of this section is to explain how $H_2$ can be tuned in order  to   have  full diversity  for   any  left  and   right  degree distribution  and  for  any  block  length.  \\

To the destination, it appears as if one source has sent its codeword over a point-to-point BF channel in case 1. Therefore, we take the constituent code defined by $H_2$ to be a full-diversity LDPC code (referred to as root-LDPC code) as constructed by Boutros et al. in \cite{bou2007dp}, \cite{boutros2007daa} for non-cooperative single-antenna channels with two or more fading states per codeword. The Tanner graph notation for the root-LDPC code is given in Fig. \ref{notation_trellis}. This notation is essential for the analysis because we seek full diversity under iterative decoding.
\begin{figure}[!hbtp]
   \centering
   {\includegraphics[width = 0.35 \textwidth]{./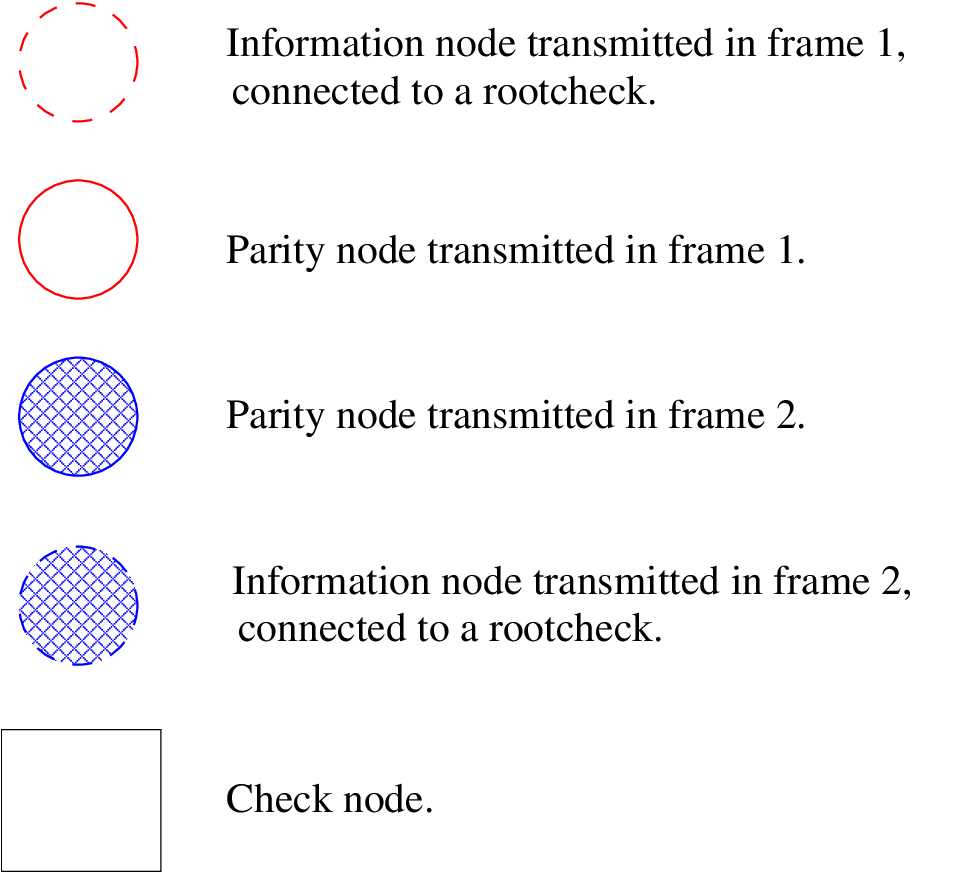}}
   \caption{Notation for the Tanner graph of a full-diversity LDPC code.}
	\label{notation_trellis}
\end{figure}
Full diversity of a root-LDPC structure is created by \textit{rootchecks}, a special type of checknodes in the Tanner graph. As shown in Fig. \ref{fig: rootcheck}, the root and the leaves of this special checknode do not belong to the same frame. When the rootbit is in frame 1, the leavebits are in frame 2, and vice versa. Using the limiting case of a Block-Erasure Channel, it is easy to verify that a rootbit is determined via its rootcheck when its own frame is erased.
\begin{figure}[!hbtp]
   \centering
   \subfigure
   {\includegraphics[width = 0.3\textwidth]{./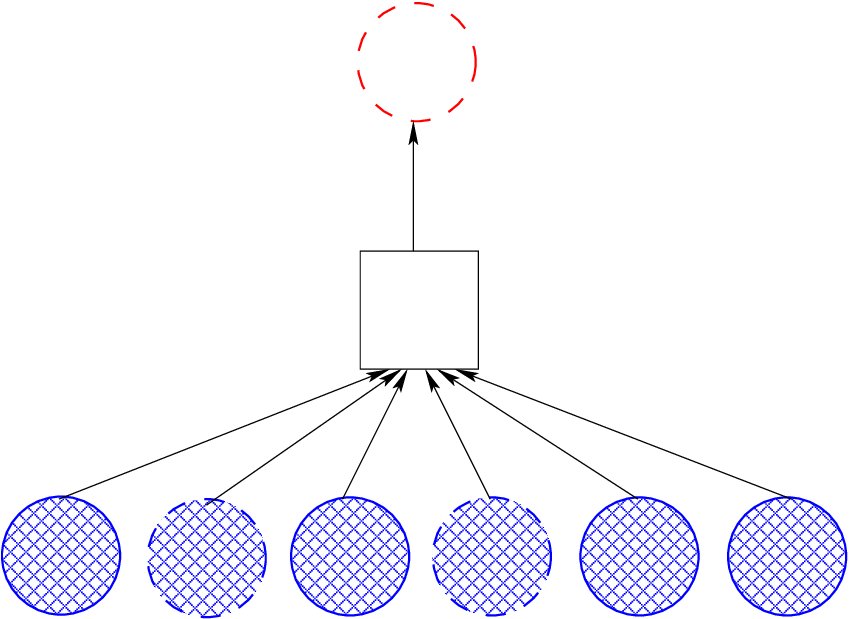}}
   \qquad \subfigure{\includegraphics[width =0.3\textwidth]{./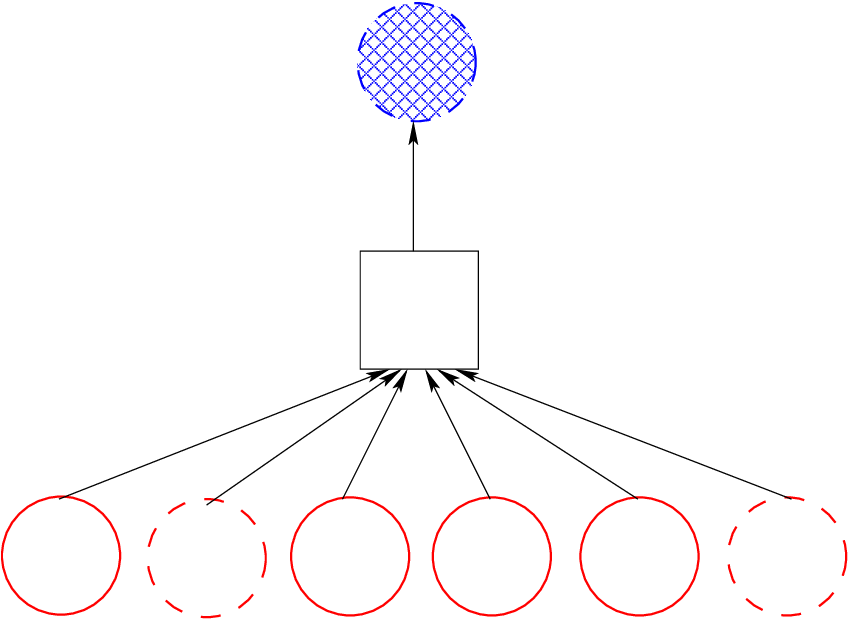}}
   \caption{Two types of rootchecks. On the left-hand side, the rootbit belongs to frame 1 and the leavebits belong to frame 2. The symmetric case where channel states are switched is shown at the right-hand side.}
   \label{fig: rootcheck}
\end{figure}
The complete root-LDPC structure is built after splitting information bits into two classes, denoted $1i$ and $2i$, and parity bits into two classes, denoted $1p$ and $2p$. The checknodes are cut into two classes denoted $3c$ and $4c$\footnote{The checknode notation $1c$ and $2c$ is reserved for $H_1$ in the cooperative code as described in the next subsection.}. The classes $3c$ and $4c$ consist of rootchecks for information bits $1i$ and $2i$ respectively.
The complete root-LDPC structure including all types of nodes is illustrated in Figs. \ref{fig: full-diversity} and \ref{fig: parity-check full-diversity}. Rootchecks are translated into two identity matrices (or permutation matrices in general) inside the parity-check matrix in Fig. \ref{fig: parity-check full-diversity}.

\begin{figure}[!hbtp]
   \centering
   {\includegraphics[width = 0.45 \textwidth]{./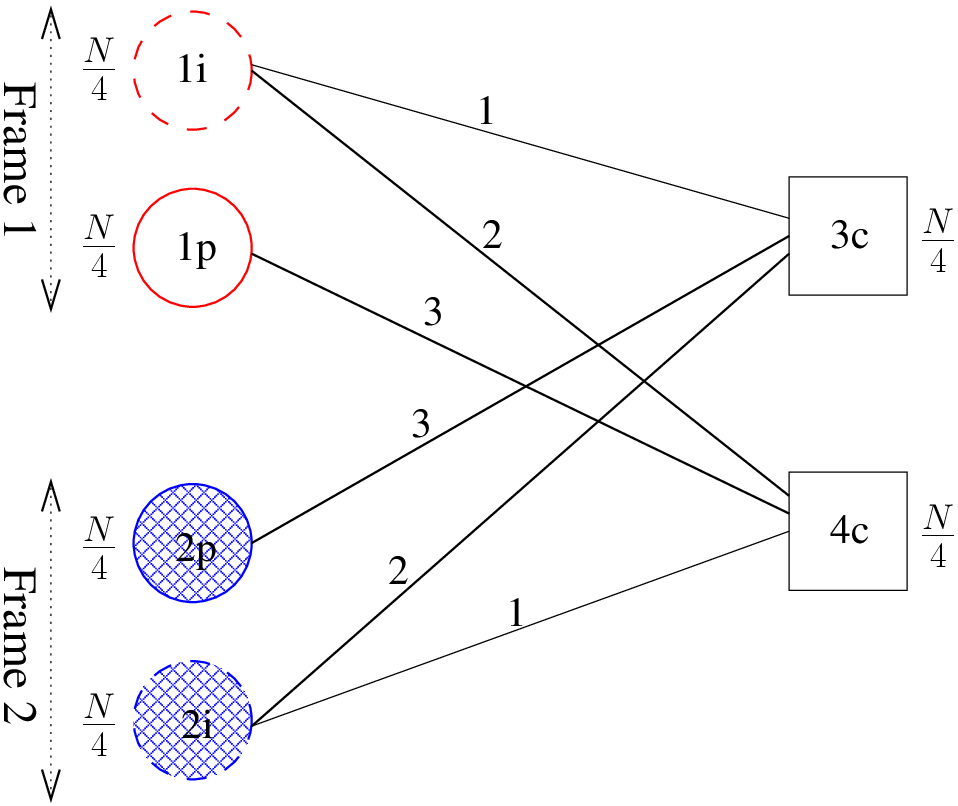}}
   \caption{Tanner graph of a full-diversity LDPC code of length $N$ and rate $\frac{1}{2}$. This compact graph representation has been adopted from \protect\cite{bou2007dp}, \protect\cite{boutros2007daa}, it is also known as protograph representation \protect\cite{thorpe2003lcp}. The integers labeling the edges of the Tanner graph indicate the degree of a node along those edges for a regular (3,6) root-LDPC code. The binary elements are split into four classes of each $\frac{N}{4}$ bits. The checknodes are cut into two classes of $\frac{N}{4}$ checks.}
	\label{fig: full-diversity}
\end{figure}

The proof of full-diversity for block-Rayleigh fading can be found in \cite{bou2007dp}. Note that the diversity order of the root-LDPC code does not
depend on the right or left degree distributions. For simplicity, we only showed a regular (3,6) structure in Fig. \ref{fig: full-diversity}. \\

\begin{figure}[!htp]
   \centering
   {\includegraphics[width = 0.45 \textwidth]{./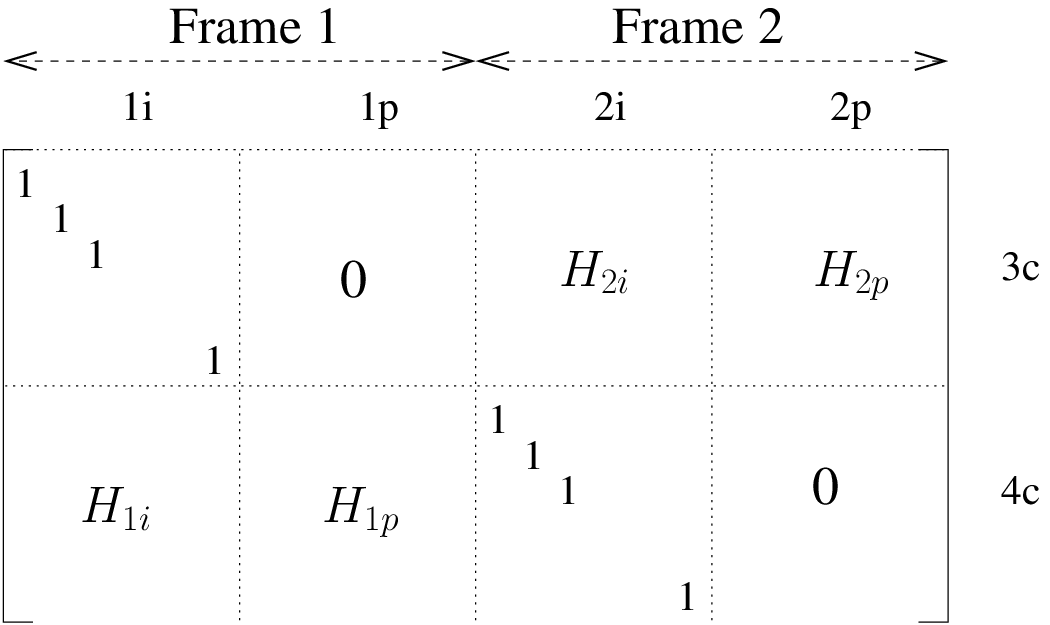}}
   \caption{Parity-check matrix of a rate $\frac{1}{2}$ root-LDPC code. }
   \label{fig: parity-check full-diversity}
\end{figure}

Note that although this code is natural for the point-to-point BF channel, it isn't for the cooperative MAC. The source is sending only half of its information bits to the relay, who is supposed to decode all the information bits. This sounds counter-intuitive and we are the first to apply this concept in cooperative communications. Although it is counter-intuitive, it is necessary to achieve full diversity with iterative decoding, as explained above.

For asymptotic code lengths,  multi-edge type messages propagate in the root-LDPC  graph  \cite{richardson2004met}. One has to choose between two  different root-LDPC   ensembles.   If   we  refer   to  the   Tanner   graph  in Fig. \ref{fig: full-diversity}, the two ensembles are distinguished as follows: (i) The first ensemble is built by two random edge permutations (edge interleavers) connecting $3c$ to ($2i$, $2p$) and $4c$ to ($1i$, $1p$) respectively. This is equivalent to the random generation of two low-density matrices ($H_{2i}$,  $H_{2p}$) and ($H_{1i}$, $H_{1p}$) in the   parity-check  matrix  shown   in  Fig.   \ref{fig:  parity-check full-diversity}. (ii) The  second ensemble is  built by four  random edge permutations $3c-2i$, $3c-2p$, $4c-1i$,  and $4c-1p$. In the root-LDPC parity-check  matrix, this  is equivalent  to building  seperately the four  submatrices  $H_{2i}$, $H_{2p}$,  $H_{1i}$,  and $H_{1p}$.   For simplicity reasons, mainly in  the density evolution (DE) analysis, we adopt  the first  root-LDPC  ensemble as  part  of the  full-diversity
cooperative code proposed in the next subsection.

\subsection{Rate-compatible full-diversity LDPC codes}
\label{sec: Rate-compatible full-diversity LDPC codes}

The difference with \cite{bou2007dp} is that our code construction must take into account the protocol of coded cooperation, i.e., the 4 different cases, to perform well on this channel. Furthermore, the optimized degree distributions of our code construction will be different from  \cite{bou2007dp}, because of the multi-edge type structure \cite{richardson2004met} of this code construction. The structure of  an LDPC ensemble for coded cooperation is  derived by joining  the   rate-compatibility  property  and   the  full-diversity property. The global parity-check matrix is obtained by  embedding  the   root-LDPC  matrix  (Fig.  \ref{fig:  parity-check full-diversity})  into  the  rate-compatible  matrix  (Fig.  \ref{fig: parity1}). This leads to an  asymmetric code where class $1i$ may have a higher coding gain than class  $2i$. Therefore, we propose an extension to the ``extending'' technique, due to the fact that we split the information bits over two frames, which is a new phenomenon. To get a balanced structure, we replace the zero-padded $H_1$ by the direct sum of  two  rate  $R_1$  codes   defined  by  $H_{1s}$  and  $H_{1r}$  as illustrated in Fig. \ref{fig: rate-compatible full-diversity H}. Thus, the constituent  code $H_{1s}$ protects  bits $1i$ and $1p$  via extra parity bits $p_1'$. Similarly, in  the second frame, extra parity bits $p_2'$  are generated from  $2i$ and  $2p$. The  bottom of  the global parity-check   matrix  simply   includes   the  root-LDPC   structure, connecting ($1i$, $1p$) to ($2i$,  $2p$). For simplicity we can assume that $H_{1s}$ and  $H_{1r}$ belong to the same rate $R_1$ random LDPC ensemble,   defined  by   the  degree   distributions  $(\lambda_1(x), \rho_1(x))$.  Hence, if the  degree distribution  of the  root-LDPC is $(\lambda_2(x), \rho_2(x))$, we refer to the rate-compatible root-LDPC (RCR-LDPC) as  a $(\lambda_1(x),  \rho_1(x),  \lambda_2(x), \rho_2(x))$  code.  The Tanner graphs of  a regular $(3, 9, 3, 6)$ LDPC  code and an irregular $(\lambda_1(x), \rho_1(x), \lambda_2(x), \rho_2(x))$ code are shown in Figs. \ref{fig: Tanner rate-compatible full-diversity H} and \ref{fig: Tanner2   rate-compatible  full-diversity   H}.  Since   we  guarantee full diversity via  a root-LDPC with  a fixed rate  $\frac{1}{2}$, the global coding  rate of the RCR-LDPC  code observed at
the destination is $R_c = \frac{R_1}{2}$. As a consequence, the global coding rate $R_c$ can be easily varied through $R_1$ and is upper limited by $0.5$.

\begin{figure}[!htbp]
   \centering
   {\includegraphics[width = 0.45 \textwidth]{./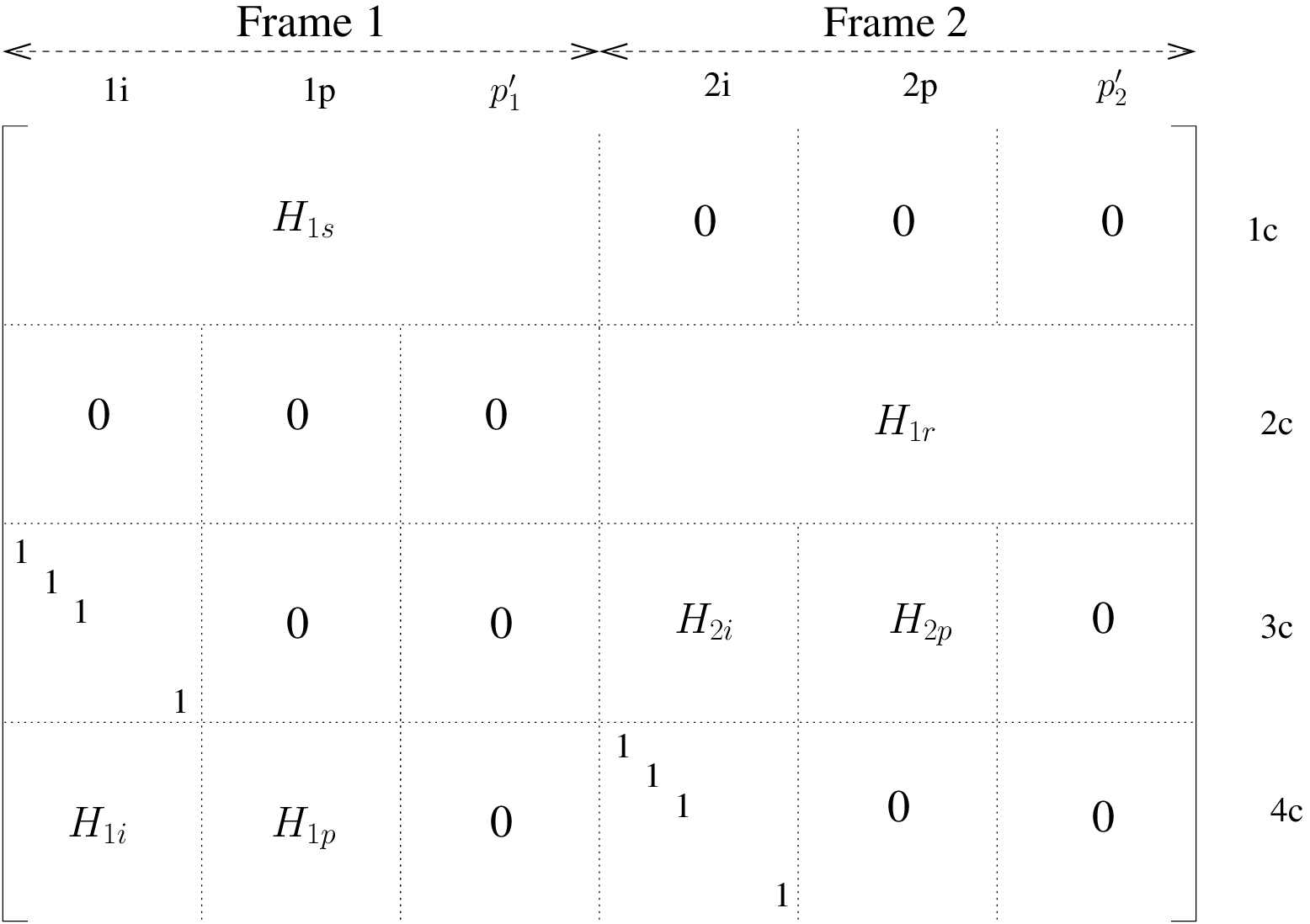}}
   \caption{Parity-check matrix of a RCR-LDPC code for coded cooperation. The upper coding rate associated to $H_{1s}$ and $H_{1r}$ is $R_1 = \frac{2}{3}$, the bottom root-LDPC coding rate is $\frac{1}{2}$, and the overall coding rate is $R_c = \frac{R_1}{2} = \frac{1}{3}$.}
	\label{fig: rate-compatible full-diversity H}
\end{figure}

\begin{figure}[!hbtp]
   \centering
   {\includegraphics[width = 0.45 \textwidth]{./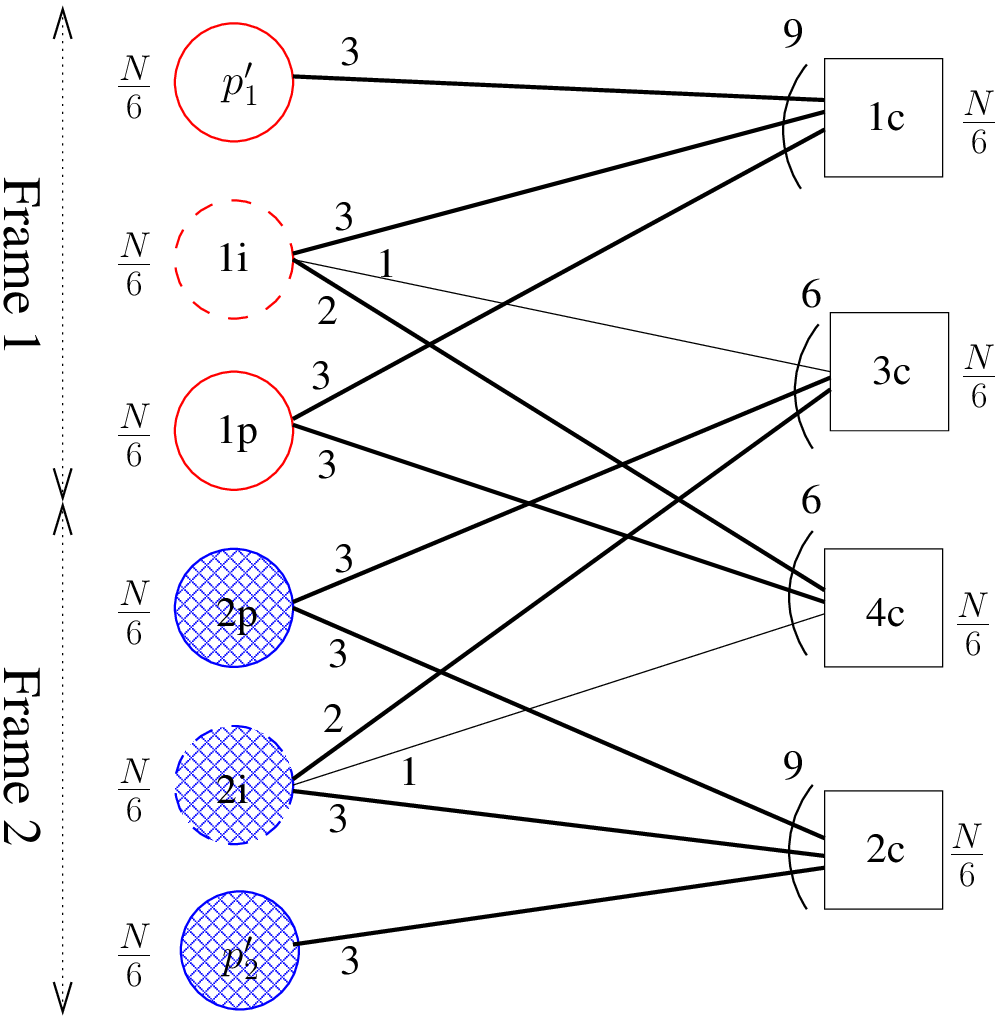}}
   \caption{Tanner graph of a regular $(3,9,3,6)$ RCR-LDPC code for coded cooperation. We see that the average bit degree is $\bar{d_b} = 5$ and the average check degree is $\bar{d_c} = \frac{15}{2}$ which results in $R_c = 1-\frac{\bar{d_b}}{\bar{d_c}} = \frac{1}{3}$.}
	\label{fig: Tanner rate-compatible full-diversity H}
\end{figure}

\begin{figure}[!hbtp]
	\centering
	\includegraphics[width = 0.45 \textwidth]{./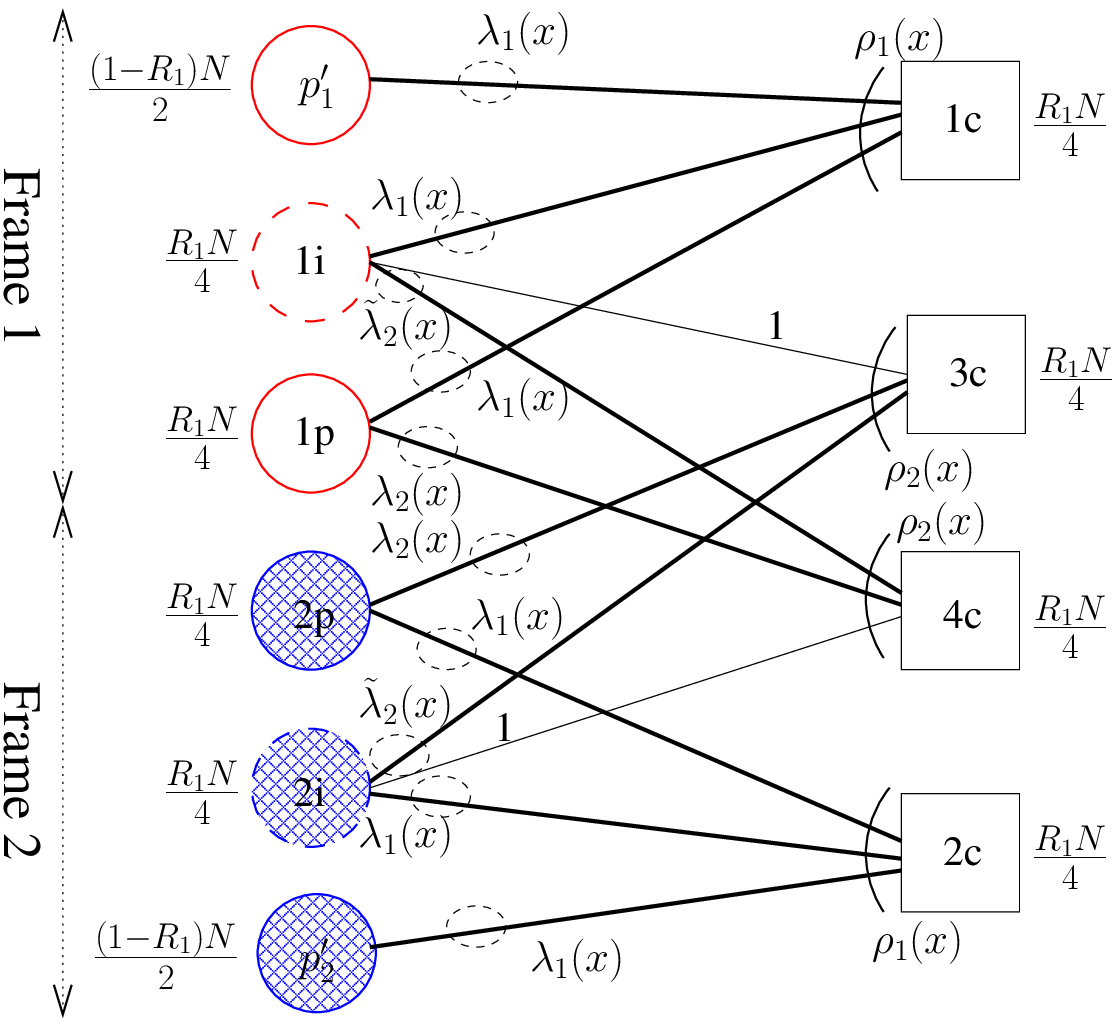}
	\caption{Tanner graph of an irregular RCR-LDPC code for coded cooperation. The binary elements are split into six classes, $p_1'$ and $p_2'$ of each $\frac{(1-R_1)N}{2}$ bits and $1i$, $1p$, $2i$, and $2p$ of each $\frac{R_1N}{4}$ bits. The checknodes are cut into four classes of $\frac{R_1N}{4}$ checks.}
	\label{fig: Tanner2 rate-compatible full-diversity H}
\end{figure}

Due to the identity matrices inside the parity-check matrix, new polynomials $\tilde{\lambda}_2(x)$ appear in Fig. \ref{fig: Tanner2 rate-compatible full-diversity H} in the connections $1i-4c$ and $2i-3c$, as illustrated in Fig. \ref{fig: new polynomial}.

\begin{figure}[!hbtp]
	\centering
	\includegraphics[width = 0.45 \textwidth]{./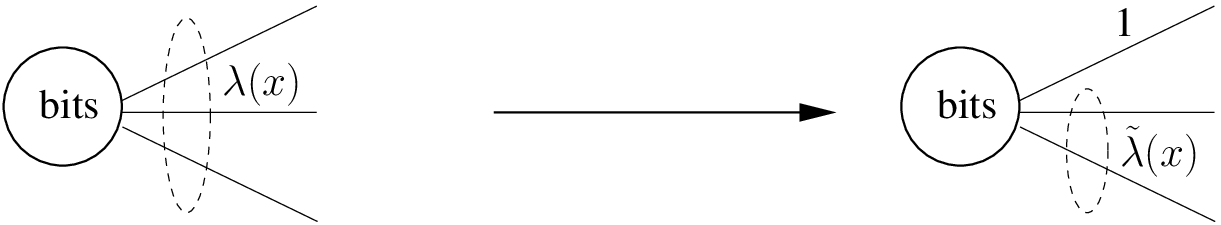}
	\caption{Transition from a traditional representation, characterized by an edge distribution polynomial $\lambda(x)$, towards a representation where one edge per bitnode is isolated resulting in a new degree distribution $\tilde{\lambda}(x)$.}
	\label{fig: new polynomial}
\end{figure}

~\\~\\

\begin{proposition}
\label{prop: newpolynomials}
	In a Tanner graph with a left degree distribution $\lambda(x)$, isolating one edge per bitnode yields a new left degree distribution described by the polynomial $\tilde{\lambda}(x)$:
\begin{equation}
	\tilde{\lambda} \left( x \right) = \sum_i{ \tilde{\lambda}_i \ x^{i-1} }, ~~~~~~ \tilde{\lambda}_{i-1} = \frac{\lambda_i (i-1)/i}{\sum_j{\lambda_j (j-1)/j}}.
\end{equation}
\end{proposition}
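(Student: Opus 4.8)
The plan is to settle the claim by a direct edge-counting argument that moves back and forth between the edge-perspective and node-perspective degree distributions introduced in Section \ref{sec: System model and notation}.

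First I would fix a single graph in the ensemble with $L$ bitnodes and $E=\sum_i i\,n_i$ edges, where $n_i$ is the number of bitnodes of degree $i$. By the definition of the node-perspective polynomial, $n_i = L\,\mathring{\lambda}_i$ with $\mathring{\lambda}_i = (\lambda_i/i)\big/\sum_j(\lambda_j/j)$, so $n_i$ is proportional to $\lambda_i/i$. Next, isolating exactly one edge at every bitnode removes $L$ edges in total, leaving a reduced graph with $E-L$ edges in which a former degree-$i$ bitnode now has degree $i-1$ and therefore contributes $i-1$ edges. Writing the edge-perspective left degree distribution of the reduced graph as $\tilde{\lambda}(x)=\sum_k \tilde{\lambda}_k x^{k-1}$, its coefficient at degree $i-1$ is the number of surviving edges incident to (former) degree-$i$ bitnodes divided by the total number of surviving edges,
\[
\tilde{\lambda}_{i-1} = \frac{(i-1)\,n_i}{\sum_j (j-1)\,n_j}.
\]
Substituting $n_i \propto \lambda_i/i$ and cancelling the common proportionality constant yields $\tilde{\lambda}_{i-1} = \dfrac{\lambda_i(i-1)/i}{\sum_j \lambda_j(j-1)/j}$, which is exactly the asserted formula; summing over $i$ shows $\sum_k \tilde{\lambda}_k = 1$, and since every term $\lambda_j(j-1)/j$ is nonnegative with $j\ge 2$ the denominator is strictly positive, so $\tilde{\lambda}(x)$ is a genuine degree distribution.

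There is no substantial obstacle here: the statement is essentially a change of variables between the two conventions of Section \ref{sec: System model and notation}, and the argument above is just bookkeeping. The only two points I would flag explicitly are (i) that the operation is well defined because $\lambda(x)$ has no constant term, so every bitnode has degree at least $2$ and isolating one edge never deletes the last edge of a bitnode, and (ii) that the index shift — a degree-$i$ bitnode of the original graph becomes a degree-$(i-1)$ bitnode of the reduced graph — must be carried consistently, which is precisely what produces the shifted subscript $\tilde{\lambda}_{i-1}$ on the left-hand side of the displayed identity. If one prefers an ensemble-level statement rather than a fixed-graph one, the same computation goes through verbatim by replacing counts $n_i$ with expected counts, since all quantities involved are linear in the $n_i$.
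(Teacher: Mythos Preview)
Your proof is correct and follows essentially the same edge-counting argument as the paper: both compute the fraction of surviving edges attached to former degree-$i$ bitnodes after one edge per bitnode has been removed. The only cosmetic difference is that you organize the count around the node counts $n_i\propto\lambda_i/i$, whereas the paper writes the same quantity as $T_{\mathrm{bit},i}-(\lambda_i/i)T_{\mathrm{bit}}$ before simplifying; the substance and the final simplification are identical.
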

\begin{proof}
Let us define $T_{\textrm{bit,}i}$ as the number of edges connected to a bitnode of degree $i$. Similarly, the number
of all edges is denoted $T_{\textrm{bit}}$. From Section \ref{sec: System model and notation}, we know that $\lambda(x)=\sum_{i=2}^{d_{bmax}}\lambda_i x^{i-1}$ expresses the left degree distribution, where $\lambda_i$ is the fraction of all edges in the Tanner graph, connected to a bitnode of degree $i$. So finally $\lambda_i=\frac{T_{\textrm{bit,}i}}{T_{\textrm{bit}}}$.
A similar reasoning can be followed to determine $\tilde{\lambda}_i$:
\begin{eqnarray*}
	\tilde{\lambda}_{i-1} &\stackrel{a)}{=}& \frac{T_{\textrm{bit,}i} - \frac{\lambda_i}{i} T_{\textrm{bit}}}{T_{\textrm{bit}} - \sum_j {\frac{\lambda_j}{j} T_{\textrm{bit}}}} \\
				  &\stackrel{b)}{=}& \frac{\lambda_i T_{\textrm{bit}} - \frac{\lambda_i}{i} T_{\textrm{bit}}}{T_{\textrm{bit}} - \sum_j {\frac{\lambda_j}{j} T_{\textrm{bit}}}} \\
				  &=& \frac{\lambda_i  - \frac{\lambda_i}{i} }{ \sum_j{\frac{\lambda_j}{j} j} - \sum_j { \frac{\lambda_j}{j} }} \\
				  &=& \frac{\frac{\lambda_i}{i} (i-1)}{\sum_j{\frac{\lambda_j}{j} (j-1)}} .
\end{eqnarray*}

\begin{description}
	\item[a)] $\sum_j {\frac{\lambda_j}{j} T_{\textrm{bit}}}$ is equal to the number of edges that are removed which is equal to the number of bits.
	\item[b)] $\lambda_i T_{\textrm{bit}}$ is equal to the number of edges connected to a bit of degree $i$.
\end{description}
\end{proof}
In Section \ref{sec: density evolution}, we will also use $\tilde{\rho}\left( x \right)$, which is defined similarly as $\tilde{\lambda}\left( x \right)$. 

\begin{proposition}
Consider a $(\lambda_1(x), \rho_1(x), \lambda_2(x), \rho_2(x))$ RCR-LDPC code for coded cooperation transmitted on a 2-user block-fading cooperative MAC. Then, under iterative belief propagation decoding, the RCR-LDPC code has full diversity.
\end{proposition}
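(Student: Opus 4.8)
The plan is to reduce the statement to a purely combinatorial property of the Tanner graph, in exactly the spirit of the proof of Corollary~\ref{cor: outage}: since the block-erasure channel is the extremal case of a block-fading channel, it suffices to show that belief propagation on a block-erasure channel recovers the information bits $1i$ and $2i$ of user 1 whenever \emph{at most one} of the two fading states (frames) is erased. Granting this, a message of user 1 is lost only when both frames are simultaneously in deep fade, an event of probability $\doteq \gamma^{-2}$, so that $d=2$.

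First I would dispose of the easy cases. As already argued in the text, for user 1 the situations ``case 2'' and ``case 4'' each force at least two independent outage events (one on an interuser link, one on a source--destination link), hence contribute with diversity at least two, and ``case 3'' dominates ``case 1'' for user 1. It therefore remains to treat case 1, in which the first fading state of user 1's codeword is the user1-to-destination link carrying the frame-1 bits $\{1i, 1p, p_1'\}$ and the second fading state is the user2-to-destination link carrying the frame-2 bits $\{2i, 2p, p_2'\}$.

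Then comes the core step, which is the translation of the root-LDPC full-diversity argument of \cite{bou2007dp} to the present structure. Suppose frame 1 is erased (frame 2 erased is symmetric through the classes $4c$ and $2i$). Every checknode of class $3c$ is a rootcheck whose rootbit lies in $1i$ and whose leaves lie entirely in frame 2, namely in $\{2i, 2p\}$; by construction a $3c$ check is connected to no other frame-1 bit and to none of the $p_1', p_2'$ bits. Consequently, once frame 2 is received noiselessly, each $3c$ check has a single erased neighbour --- its rootbit --- and BP recovers the whole class $1i$ in one iteration; since $2i$ is received directly, the information message $(1i,2i)$ of user 1 is completely known and decoded. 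The point to make explicit is the \emph{invariance of the rootcheck mechanism} under the rate-compatibility extension: adjoining the submatrices $H_{1s}, H_{1r}$ and the bits $p_1', p_2'$ leaves the $3c$ and $4c$ checknodes untouched, additional checks can only help BP, and the new bits never enter a rootcheck. Finally I would note that this reasoning uses only the protograph and not the thresholds, so it is valid for any $(\lambda_1(x),\rho_1(x),\lambda_2(x),\rho_2(x))$ and any block length, and that the admissibility condition of Corollary~\ref{cor: outage}, here $R_c = R_1/2 \le 1/2 = \min(\beta,1-\beta)$ for $\beta=1/2$, is precisely what makes diversity two attainable at all.

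The main obstacle I anticipate is making the passage ``block-fading $\to$ block-erasure'' rigorous at the level of the diversity exponent under iterative decoding: one must argue that a weak but nonzero fading gain on one frame plays, for the recovered information bits, the role of an erasure --- i.e.\ the extrinsic messages flowing out of the $3c$ rootchecks toward $1i$ stay reliable because they are sums of quantities observed on the strong frame --- so that $P_e(\gamma)$ is dominated, up to a constant, by $P(\alpha_{1d} \text{ and } \alpha_{2d} \text{ both small}) \doteq \gamma^{-2}$. This is the same subtlety underlying the extremality argument in Corollary~\ref{cor: outage}, and I would handle it the same way.
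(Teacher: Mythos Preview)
Your core reduction is the same as the paper's: the extra constituent codes $H_{1s},H_{1r}$ and the bits $p_1',p_2'$ never touch the rootchecks $3c,4c$, and the messages they inject can only help, so full-diversity of the rate-compatible code reduces to full-diversity of the embedded root-LDPC. Where you diverge is in \emph{how} you cash in that reduction. You pass through the block-erasure channel and a one-iteration combinatorial argument, and then face the lift back to block-Rayleigh fading that you flag as the main obstacle. The paper instead stays at the level of message densities: it moves to the suboptimal min-sum decoder, writes the a~posteriori LLR of a bit in $1i$ as $\Lambda_0+\sum_i\Lambda^e_{i1}+\sum_i\Lambda^e_{i2}$ with $\Lambda^e_{i1}$ coming from $H_{1s}$ and $\Lambda^e_{i2}$ from $H_2$, observes that convolving in the $\Lambda^e_{i1}$ density can only physically upgrade the resulting density, and concludes that $P_e(1i)$ is no worse than for $\Lambda_0+\sum_i\Lambda^e_{i2}$ alone --- which is exactly the root-LDPC quantity already shown to have diversity two on block-Rayleigh fading in \cite{bou2007dp}. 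This sidesteps your obstacle entirely.

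The gap in your plan is therefore not the combinatorics but the proposed resolution of the obstacle. You say you would handle the erasure-to-fading passage ``the same way'' as in Corollary~\ref{cor: outage}, but that corollary uses the BEC to prove a \emph{necessary} rate condition on the outage probability; it does not, and cannot, be read as a tool for proving \emph{sufficiency} of a decoder on block-Rayleigh fading. Showing that BP succeeds when one frame is perfectly erased and the other perfectly received does not by itself control $P_e(\gamma)$ when one fading gain is merely small. The paper's density-upgrade argument is precisely the missing ingredient: it lets you inherit the block-Rayleigh diversity proof of \cite{bou2007dp} verbatim rather than re-deriving it from an erasure limit.
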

\begin{proof} 

Let $\Lambda_i^a$, $i = 1 \ldots d_c-1$ denote the input log-ratio probabilistic messages to a checknode $\Phi$ of degree $d_c$.
The output message $\Lambda^e$ for belief propagation is \cite{richardson2008mct}
\begin{equation*}
	\Lambda^e = 2 \textrm{th}^{-1} \left(\prod_{i=1}^{d_c-1} \textrm{th}\left(\frac{\Lambda_i^a}{2}\right)\right),
\end{equation*}

\noindent where $\textrm{th}(x)$ denotes the hyperbolic-tangent function. Superscripts $a$ and $e$ stand for \textit{a priori} and \textit{extrinsic},
respectively. To simplify the proof, we show that the suboptimal min-sum decoder yields a diversity order $2$.
For a min-sum decoder, the output message produced by a checknode $\Phi$ is now

\begin{equation*}
	\Lambda^e = \min{(|\Lambda_i^a|)} \prod_{i=1}^{d_c-1} \textrm{sign}(\Lambda_i^a).
\end{equation*}

\noindent An information bit $\vartheta$ of class $1i$ of degree $d_b$ has $\Lambda_0 = \frac{2 \alpha_{sr} y_{sr}}{\sigma^2}$ where $\Lambda_0$ is the log-likelihood ratio coming from the likelihood $p(y_{sd}|\vartheta)$. It also receives $d_b$ messages: $\Lambda_{1,i}^e$, $i =1 \ldots d_{b1}$ and $\Lambda_{2,i}^e$, $i =1 \ldots d_{b2}$, $d_b = d_{b1}+d_{b2}$, from its neighbouring checknodes in the constituent codes $H_{1s}$ and $H_2$ respectively. The total \textit{a posteriori} message corresponding to $\vartheta$ is $\Lambda = \Lambda_0 + \sum_{i=1}^{d_{b1}} {\Lambda_{1,i}^e} + \sum_{i=1}^{d_{b2}} {\Lambda_{2,i}^e}$. In \cite{bou2007dp} it is proven that full-diversity is achieved if and only if $\Lambda$ behaves as $a \alpha_{1d}^2 + b \alpha_{2d}^2$, where $a, b > 0$. 

The addition of $\sum_{i=1}^{d_{b1}} \Lambda_{1,i}^e$ cannot degrade the error probability $P_e(1i)$ because the convolution with the density of messages from $H_{1s}$ can only physically upgrade the resulting density. Thus, it is sufficient to prove that message $\Lambda_0 + \sum_{i=1}^{d_{b2}} \Lambda_{2,i}^e$ exhibits full diversity, i.e., behaves as $a \alpha_{1d}^2 + b \alpha_{2d}^2$, which is proven in \cite{bou2007dp}.
\end{proof}

\section{Density Evolution on the Block-Fading Relay Channel}
\label{sec: density evolution}
Richardson  and  Urbanke
\cite{richardson2001cld,  richardson2008mct} established that,  if the
block length  is large  enough, (almost) all  codes in an  ensemble of
codes\footnote{The ensemble of all LDPC-codes that satisfy the left
degree  distribution  $\lambda(x)$   and  right  degree  distribution
$\rho(x)$  is considered.  The  ensemble is  equipped  with a  uniform
probability distribution.}  behave alike, so the  determination of the
average  behavior  is sufficient  to  characterize  a particular  code
behavior. This  average behavior converges  to the cycle-free  case if
the  block  length  augments  and  it can be found in a deterministic  way through density evolution (DE).  The  evolution trees
represent the  local neighborhood of  a bitnode in an  infinite length
code whose graph  has no cycles, hence incoming messages to every node are independent.  \\

\subsection{Interuser channel}
To determine the density of messages propagating in the graph of the constituent code $H_{1s}$, the following notation is used:

\begin{eqnarray*}
	d_{sr}^m(x) &=& \textrm{density of message from a bitnode to} \\
				&& \textrm{ a checknode in the $\textrm{m}^{\textrm{th}}$ iteration}. \\
	\mu_{sr}(x) &=& \textrm{density of the likelihood of} \\
		&& \textrm{ the source-relay channel.}
\end{eqnarray*}
\noindent Let $X_1 \sim p_1(x)$ and $X_2 \sim p_2(x)$ be two independent real random variables. The density function of $X_1 + X_2$
is obtained by convolving the two original densities, written as $p_1(x) \otimes p_2(x)$. The notation $p(x)^{\otimes n}$ denotes
the convolution of $p(x)$ with itself $n$ times.\\

\noindent Let $X_1 \sim p_1(x)$ and $X_2 \sim p_2(x)$ be two independent real random variables. The density function $p(y)$ of the
variable $Y = 2 \ \textrm{th}^{-1} \left(  \textrm{th} \left( \frac{X_1}{2} \right)  \textrm{th} \left( \frac{X_2}{2} \right) \right)$, obtained through a checknode
with $X_1$ and $X_2$ at the input, is obtained through the \textit{R-convolution} \cite{richardson2008mct}, written as $p_1(x) \odot p_2(x)$.
The notation $p(x)^{\odot n}$ denotes the R-convolution of $p(x)$ with itself $n$ times.\\

\noindent To simplify the notations, we use the following definitions:

\begin{equation*}
	\lambda \left(p \left( x \right) \right) = \sum_i{ \lambda_i \ p(x)^{\otimes i-1} }, ~~  \rho \left(p \left( x \right) \right) = \sum_i{ \rho_i \ p(x)^{\odot i-1} }.
\end{equation*}

\noindent In the next subsection we will also use the following definitions:

\begin{eqnarray*}
	\rho \left(p \left( x \right) , t\left( x \right) \right) &=& \sum_i{ \left( \rho_i \ p(x)^{\odot i-1} \odot t(x)  \right)}, \\
	\lambda^{*} \left(p \left( x \right) \right) &=& \lambda\left(p \left( x \right) \right) \otimes \left(p \left( x \right) \right), \\
	\rho^{*} \left(p \left( x \right) \right) &=& \rho\left(p \left( x \right) \right) \odot \left(p \left( x \right) \right).
\end{eqnarray*}

\noindent The first definition is necessary because of the non-linearity of the R-convolution. Therefore, the first equation is not equal to $t(x) \odot \rho \left(p \left( x \right) \right)$. The next subsection will also use the polynomials $\mathring{\rho}^*\left( x \right)$ and $\mathring{\lambda}^*\left( x \right)$ which are defined by combining the two transformations, denoted by $\mathring{(.)}$ (see introduction) and $(.)^*$. 

Fig. \ref{fig: density subcode} illustrates the local neighborhood of a bitnode in the constituent code $H_{1s}$. 

\begin{figure}[!hbtp]
   \centering
   {\includegraphics[width = 0.48 \textwidth]{./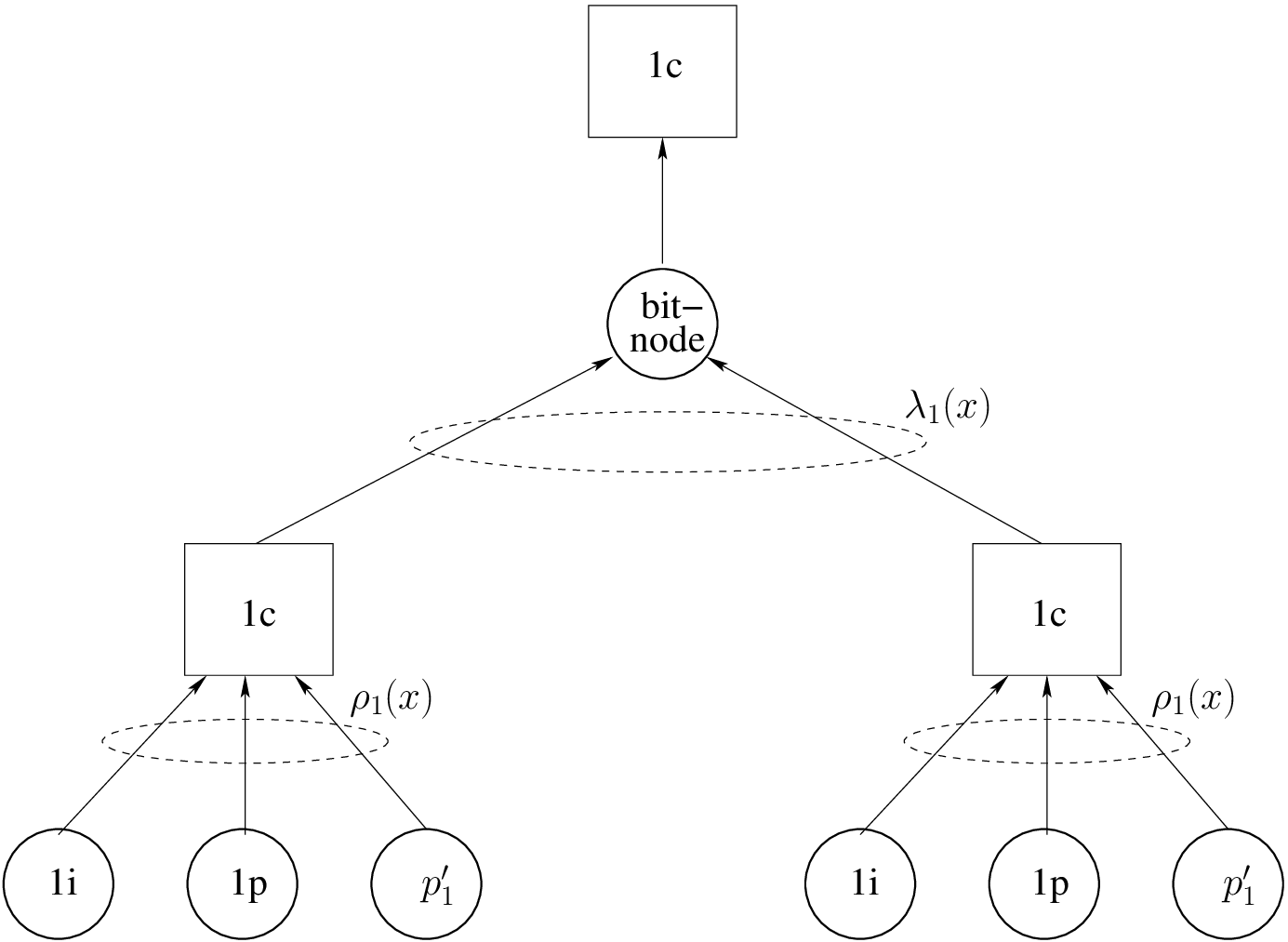}}
   \caption{Local neighborhood of a bitnode in the constituent code $H_{1s}$. This tree is used to determine the evolution of density $d_{sr}(x)$ of messages from a bitnode to a checknode. }
	\label{fig: density subcode}
\end{figure}

\noindent The DE equation in the neighborhood of the bitnode for a $(\lambda_1(x), \rho_1(x))$ LDPC code \cite{richardson2001cld} is, for all $m$,
\begin{equation}
	d_{sr}^{m+1}(x) = \mu_{sr}(x) \otimes \lambda_1\bigg( \rho_1 \big( d_{sr}^{m}(x)\big) \bigg) .
\end{equation}

\noindent The \textit{threshold} of a code is the minimum SNR at which a codeword can be decoded perfectly \cite{richardson2001cld}. Comparing the received signal-to-noise ratio with this threshold, the relay and the source can determine whether the interuser transmissions can be decoded successfully and consequently decide what to transmit in the second frame.\\

\subsection{Overall cooperative MAC}
The proposed $(\lambda_1(x), \rho_1(x), \lambda_2(x), \rho_2(x))$ root-LDPC code has 6 variable node types and 4 checknode types. Consequently, the evolution of message densities under iterative decoding has to be described through multiple evolution trees. Figs. \ref{fig: density 1i1c_symmetric}, \ref{fig: density 1i3c_symmetric} and \ref{fig: density 1i4c_symmetric} show the local neighborhood of a bit node of the class $1i$. The local neighborhoods of bit nodes of the classes $1p$, and $p_1'$ can be derived similarly. The local neighborhood of classes $2i$, $2p$, and $p_2'$ are equivalent because of code symmetry.  \\

To determine the density of messages, the following notation is used:

\begin{eqnarray*}
	a_1^m(x)\textrm{, }a_2^m(x) &=& \textrm{ density of message from $1i$ to $1c$ and} \\
								&& \textrm{ $2i$ to $2c$ respectively, at the $\textrm{m}^{\textrm{th}}$ iteration}, \\
	f_1^m(x)\textrm{, }f_2^m(x) &=& \textrm{ density of message from $1i$ to $3c$ and} \\
								&& \textrm{ $2i$ to $4c$ respectively, at the $\textrm{m}^{\textrm{th}}$ iteration}, \\
	g_1^m(x)\textrm{, }g_2^m(x) &=& \textrm{ density of message from $1i$ to $4c$ and} \\
								&& \textrm{ $2i$ to $3c$ respectively, at the $\textrm{m}^{\textrm{th}}$ iteration}, \\
	k_1^m(x)\textrm{, }k_2^m(x) &=& \textrm{ density of message from $1p$ to $1c$ and} \\
								&& \textrm{ $2p$ to $2c$ respectively, at the $\textrm{m}^{\textrm{th}}$ iteration}, \\
	l_1^m(x)\textrm{, }l_2^m(x) &=& \textrm{ density of message from $1p$ to $4c$ and} \\
								&& \textrm{ $2p$ to $3c$ respectively, at the $\textrm{m}^{\textrm{th}}$ iteration}, \\
	q_1^m(x)\textrm{, }q_2^m(x) &=& \textrm{ density of message from $p'_1$ to $1c$ and} \\
								&& \textrm{ $p'_2$ to $2c$ respectively in the $\textrm{m}^{\textrm{th}}$ iteration}, \\
	\mu_i(x) &=& \textrm{density of the likelihood of the channel} \\
			&& \textrm{ in the $i$'th frame}.
\end{eqnarray*}

\noindent Note that $\mu_2(x)$ depends on the success or the failure of the transmissions in the first frame. 

\begin{figure}
   \centering
   \resizebox{0.48 \textwidth}{!}{\input{./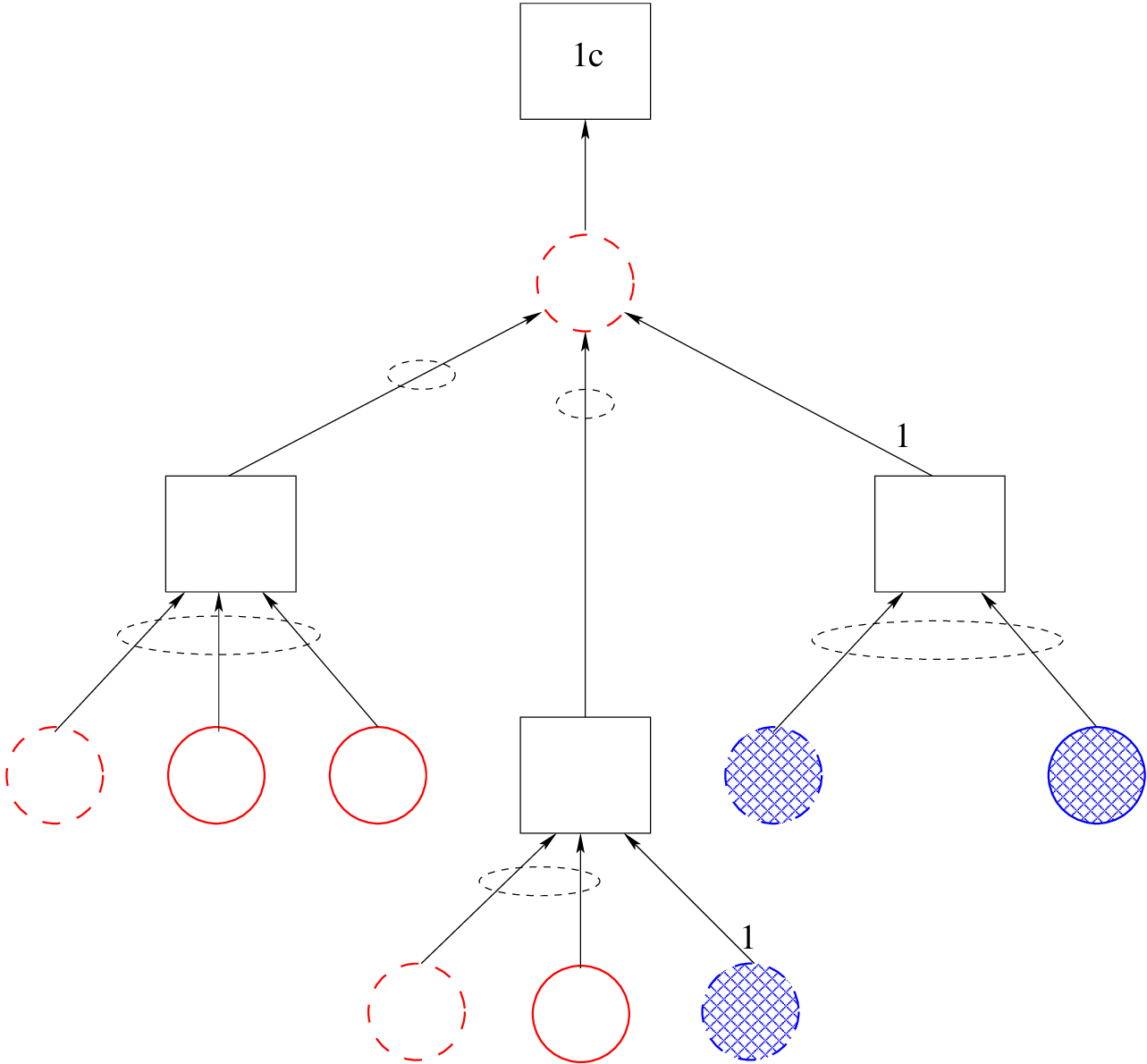_t}}
   \caption{Local neighborhood of bitnode $1i$. This tree is used to determine the evolution of the density of messages $1i \rightarrow 1c$. }
	\label{fig: density 1i1c_symmetric}
\end{figure}

\begin{proposition}
The DE equations in the neighborhood of $1i$ for a $(\lambda_1(x), \rho_1(x), \lambda_2(x), \rho_2(x))$ RCR-LDPC ensemble for coded cooperation, for all $m$, are given in Eqs. (\ref{eq: DE1}), (\ref{eq: DE2}) and (\ref{eq: DE3})

\begin{figure*}
\small
\begin{eqnarray}
	a_1^{m+1}(x) &=& \mu_1(x) \otimes \mathring{\lambda}_2 \bigg(\tilde{\rho_2} \big( f_{1i4c} \ g_1^m(x) + f_{1p4c} \ l_1^m(x), \ f_2^m(x) \big) \bigg) \otimes \lambda_1\bigg(\rho_1 \big( f_{1i1c} \ a_1^m(x) + f_{1p1c} \ k_1^m(x) + f_{p_1'1c} \ q_1^m(x)\big) \bigg) \nonumber \\
 	&& \otimes \mathring{\rho}_2\Big( f_{2i3c} \ g_2^m(x) + f_{2p3c} \ l_2^m(x) \Big), \label{eq: DE1} \\
	f_1^{m+1}(x) &=& \mu_1(x) \otimes \mathring{\lambda}_1^{*}\bigg( \rho_1 \big( f_{1i1c} \ a_1^m(x) + f_{1p1c} \ k_1^m(x) + f_{p_1'1c} \ q_1^m(x)\big)\bigg) \otimes \mathring{\lambda}_2 \bigg( \tilde{\rho_2} \big( f_{1i4c} \ g_1^m(x) + f_{1p4c} \ l_1^m(x) , \ f_2^m(x)\big)\bigg), \label{eq: DE2} \\
	g_1^{m+1}(x) &=& \mu_1(x) \otimes \mathring{\lambda}_1^{*}\bigg( \rho_1 \big( f_{1i1c} \ a_1^m(x) + f_{1p1c} \ k_1^m(x) + f_{p_1'1c} \ q_1^m(x)\big) \bigg) \otimes \tilde{\lambda}_2 \bigg( \tilde{\rho_2} \big( f_{1i4c} \ g_1^m(x) + f_{1p4c} \ l_1^m(x), \ f_2^m(x)  \big) \bigg) \nonumber \\
	&& \otimes \mathring{\rho}_2\Big( f_{2i3c} \ g_2^m(x) + f_{2p3c} \ l_2^m(x) \Big), \label{eq: DE3}
\end{eqnarray}
\end{figure*}

\noindent where
\begin{eqnarray}
	f_{1p4c} &=& \frac{\sum_i \tilde{\rho}_{2i} /i}{\sum_i \lambda_{2i} /i}, \label{eq: DE8}  \\
	f_{1i4c} &=& \frac{\sum_i \tilde{\rho}_{2i} /i}{\sum_i \tilde{\lambda}_{2i} /i}, \label{eq: DE9} \\
	f_{1p1c} &=& \frac{\sum_i \rho_{1i} /i}{\sum_i \lambda_{1i} /i}, \label{eq: DE10} \\
	f_{1i1c} &=& f_{1p1c}, \label{eq: DE11} \\
	f_{p_1'1c} &=& 1 - f_{1i1c} - f_{1p1c}, \label{eq: DE12}  \\
	f_{2i3c} &=& f_{1i4c}, \label{eq: DE13} \\
	f_{2p3c} &=& f_{1p4c}. \label{eq: DE14} 
\end{eqnarray}

\label{prop: density evol 1i}
\end{proposition}

\begin{proof}
Equations (\ref{eq: DE1})-(\ref{eq: DE14}) are directly derived from the local neighborhood trees. To obtain the proportionality factors (\ref{eq: DE8})-(\ref{eq: DE14}), it is important to remark that we use the first ensemble of root-LDPC codes, as explained at the end of Section \ref{sec: full-diversity}. Let T denote the total number of edges between the variable nodes $(1i-1p)$ and the checknodes $4c$. Fig. \ref{fig: prop constants} illustrates how $f_{1p4c}$ and $f_{1i4c}$ are obtained:

\begin{figure}
	\centering
	\includegraphics[width = 0.45 \textwidth]{./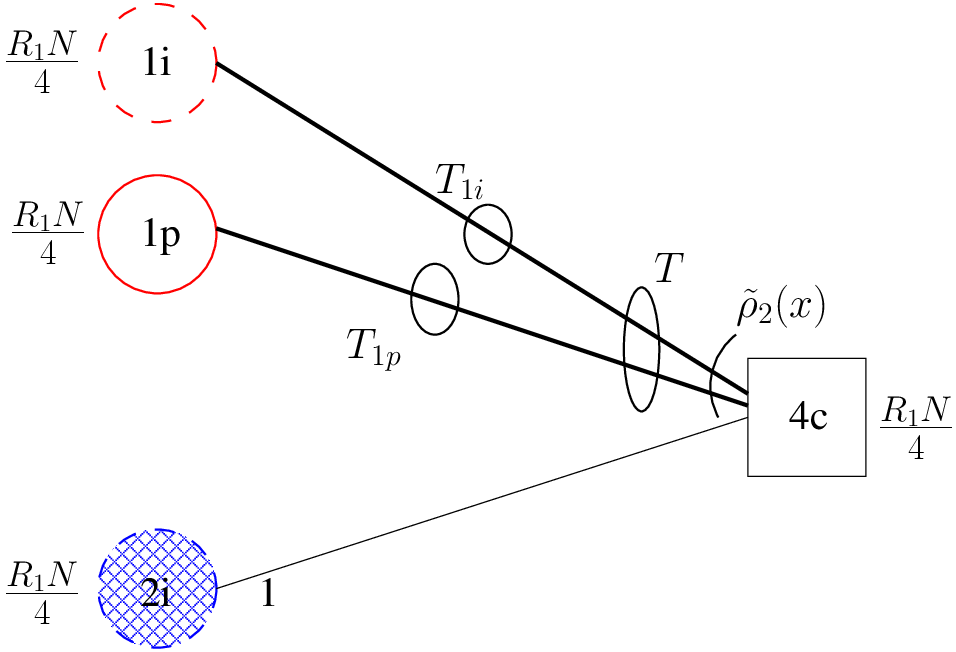}
   \caption{Part of the compact graph representation of the Tanner graph of a root-LDPC for coded cooperation. The number of edges connecting ($1i$, $1p$) to $4c$ is $T$. the number of edges connecting $1p$ to $4c$ is $T_{1p}$. The number of edges connecting $1i$ to $4c$ is $T_{1i}$.}
   \label{fig: prop constants}
\end{figure}

\begin{eqnarray}
	T &\stackrel{a)}{=}& \frac{R_1N/4}{\sum_i \tilde{\rho}_{2i}/i} \\
	T_{1p} &\stackrel{a)}{=}& \frac{R_1N/4}{\sum_i \lambda_i/i} \label{eq: prop T1p}\\
	T_{1i} &\stackrel{a)}{=}& \frac{R_1N/4}{\sum_i \tilde{\lambda}_i/i} \label{eq: prop T1i}\\
	f_{1p4c} &\stackrel{b)}{=}& \frac{T_{1p}}{T} \\
	f_{1i4c} &\stackrel{b)}{=}& \frac{T_{1i}}{T}.
\end{eqnarray}

\begin{description}
\item[a)] The number of checknodes connected to $i$ edges of $T$ is $\frac{\tilde{\rho}_{2i}}{i} T$. A Similar reasoning proves equations (\ref{eq: prop T1p}) and (\ref{eq: prop T1i}).
\item[b)] The fraction of edges $T$ connecting $1p$ to $4c$ is $f_{1p4c}$. The fraction of edges $T$ connecting $1i$ to $4c$ is $f_{1i4c}$. 
\end{description}
\end{proof}

The DE equations in the neighborhood of $1p$ and $p_1'$ for a $(\lambda_1(x), \rho_1(x), \lambda_2(x), \rho_2(x))$ RCR-LDPC ensemble for coded cooperation can be derived similarly. 

\begin{figure}
   \centering
   {\includegraphics[width = 0.48 \textwidth]{./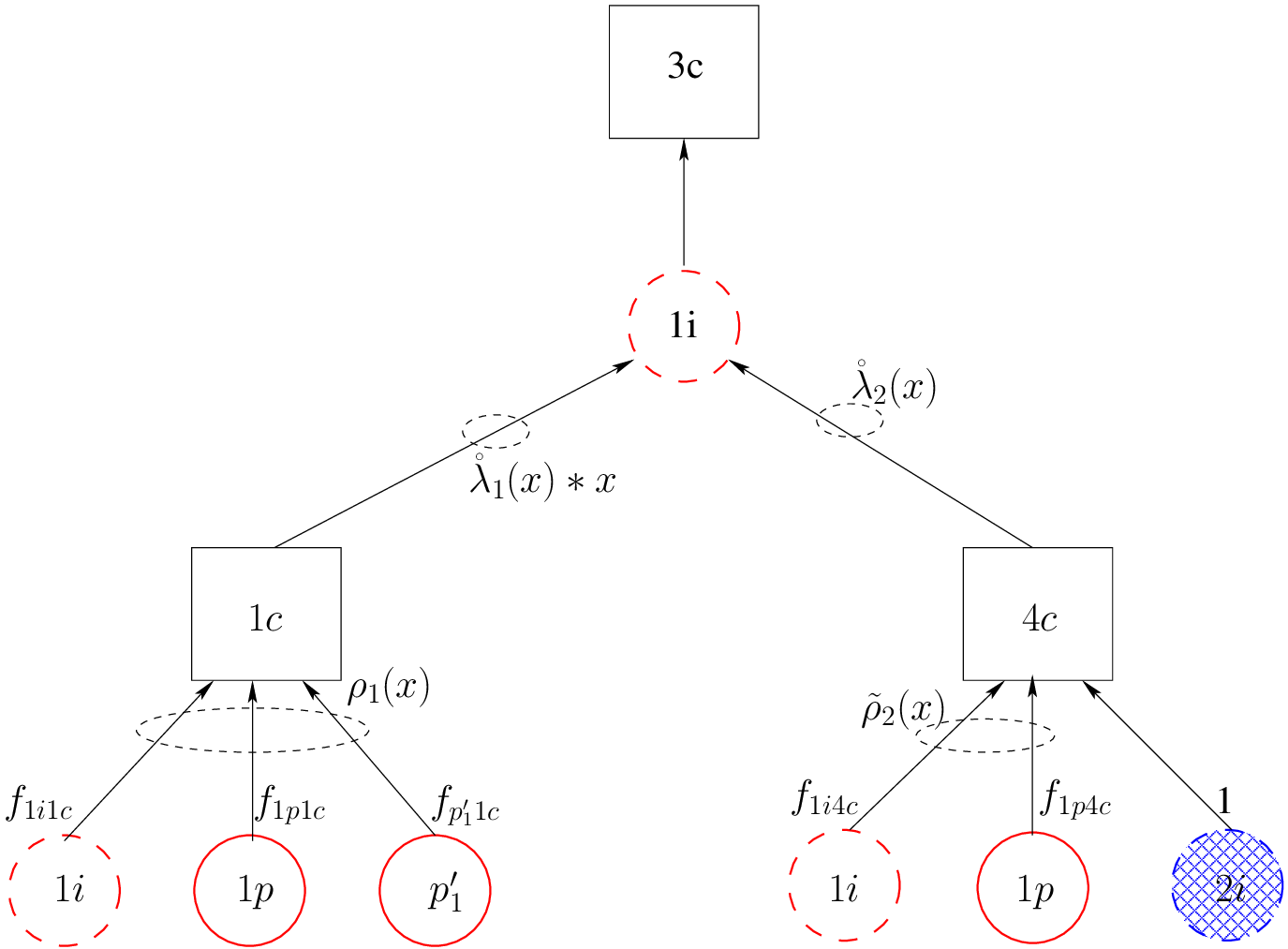}}
   \caption{Local neighborhood of bitnode $1i$. This tree is used to determine the evolution of the density of messages $1i \rightarrow 3c$. }
	\label{fig: density 1i3c_symmetric}
\end{figure}

\begin{figure}
   \centering
	\resizebox{0.48 \textwidth}{!}{\input{./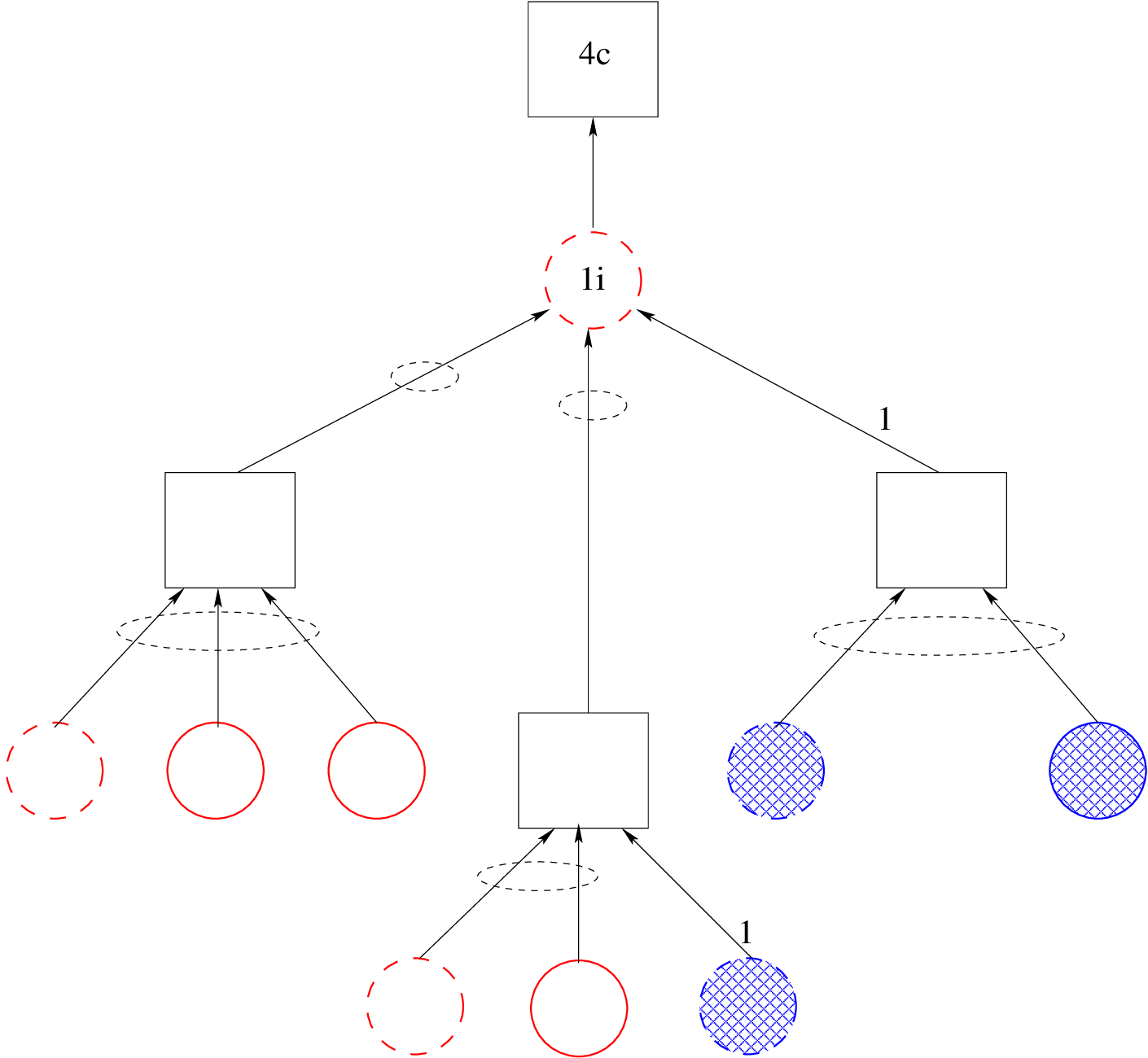_t}}
   \caption{Local neighborhood of bitnode $1i$. This tree is used to determine the evolution of the density of messages $1i \rightarrow 4c$. }
	\label{fig: density 1i4c_symmetric}
\end{figure}

Proposition \ref{prop: density evol 1i} can be used for multiple purposes. First of all, it is used to estimate the asymptotic performance. 
For a fixed fading set $(\alpha_{12},\alpha_{21}, \alpha_{1d}, \alpha_{2d})$, it is possible to determine whether the bit error probability converges to $0$ or not. We refer to the event where the bit error probability does not converge to $0$ by \textit{Density Evolution Outage} ($DEO$). Thus, at a fixed SNR, it is possible to determine the probability of a Density Evolution Outage $P_{DEO}$ by averaging over a sufficient number of fading instances. Now, it is possible to write the word error probability $P_{ew}$ of the ensemble as
\begin{equation}
P_{ew} = P_{ew|DEO} \times P_{DEO} + P_{ew|CONV} \times (1 - P_{DEO});
\label{DEO}
\end{equation}
where $P_{ew|DEO}$ is the word error probability given a \textit{DEO} event, $P_{ew|DEO}=1$, and $P_{ew|CONV}$ is the word error probability when DE converges. The probability $P_{ew|CONV}$ depends on the speed of convergence of density evolution and the population expansion of the ensemble with the number of decoding iterations \cite{Jin2005bei}, so that
\begin{equation}
	P_{DEO} \leq P_{ew}.
	\label{DEO lowerbound}
\end{equation}
Thus, the performance estimated via density evolution is a lower bound for the word error probability.

Secondly, Proposition \ref{prop: density evol 1i} can be used to determine the threshold of $\mathcal{C}$ on an ergodic channel. This does not directly serve the performance analysis for the BF channel. However, an analysis in the real space of the fading coefficients has shown that this can be used to increase the coding gain on a BF relay channel \cite{duyck2010uld}. But the optimization of the coding gain is outside the scope of this paper and here we will only use Proposition \ref{prop: density evol 1i} in the application of Eq. (\ref{DEO lowerbound}).

\section{Numerical Results}

In this section we estimate the asymptotic performance of RCR-LDPC codes through DE and verify Eq. (\ref{DEO lowerbound}) through finite length simulations. We studied different scenarios:\\

\subsubsection{Scenario 1}
\begin{itemize}
	\item The average SNR of the independent interuser channels is 5dB higher than the average SNR on the source-destination link.
	\item The average SNR of the relay-destination link is equal to that on the source-destination link.
	\item The coding rate is $R_c = \frac{1}{3}$ and the cooperation level is $\beta = 0.5$.
\end{itemize}
For this scenario, we have tested two code ensembles: a regular (3,9,3,6) RCR-LDPC code and an irregular $(\lambda_1(x), \rho_1(x), \lambda_2(x), \rho_2(x))$ RCR-LDPC code with left and right degree distributions given by the polynomials 
{\small
\begin{eqnarray*}
	\lambda_1(x) &=& 0.1989 x + 0.2305 x^2 + 0.0068 x^5 + 0.2774 x^6 \\ 
		&& + 0.14267 x^{19} + 0.1335 x^{20} + 0.0102 x^{21}, \\
	\rho_1(x) &=& x^{12}, \\
	\lambda_2(x) &=& 0.22767 x + 0.20333 x^2 + 0.2145 x^5 \\ 
		&& + 0.011048 x^6 + 0.34346 x^{19}, \\
	\rho_2(x) &=& 0.5 x^7 + 0.5 x^8 .
\end{eqnarray*}}

\subsubsection{Scenario 2}
\begin{itemize}
	\item The average SNR of the independent interuser channels is 12dB higher than the average SNR on the source-destination link.
	\item The average SNR of the relay-destination link is 4dB higher than the average SNR on the source-destination link.
	\item The coding rate is $R_c = 0.45$ and the cooperation level is $\beta = 0.5$.
\end{itemize}
Here, we imitated the channel conditions used in \cite{hu2007ldp}\footnote{We use the same distribution of the fading and the same average SNR. However, in \cite{hu2007ldp}, the source keeps transmitting in the second frame, so that a direct comparison between our code and the performance of the code proposed in \cite{hu2007ldp} is not possible.}. The average SNR of the interuser channels is high with respect to the uplink channels, allowing a high coding-rate for the source-relay channel. We used an irregular $(\lambda_1(x), \rho_1(x), \lambda_2(x), \rho_2(x))$ RCR-LDPC ensemble with left and right degree distributions given by the polynomials 
{\small
\begin{eqnarray*}
	\lambda_1(x) &=& 0.1581 x + 0.2648 x^2 + 0.1116 x^5 + 0.1354 x^6 \\
	&& + 0.3301 x^{14},\\
	\rho_1(x) &=& x^{43} ,\\
	\lambda_2(x) &=& 0.234413 x + 0.21392 x^2 + 0.123711 x^5 + 0.125548 x^6 \\ 
	&& + 0.30241 x^{19}, \\
	\rho_2(x) &=& 0.71875 x^7 + 0.28125 x^8 .
\end{eqnarray*}}
The coding rate for the interuser channel subcode $H_1$ is equal to $0.9$. 

\subsection{Density Evolution Outage}
\label{subsec: infinite length}

We evaluated the asymptotic performance of RCR-LDPC codes by applying DE on the proposed code construction. The probability of Density Evolution Outage $P_{DEO}$, which is a lower bound of the WER, for both scenarios is illustrated in Fig. \ref{fig: rate=0.333, beta=0.5, interuser=5}.
\begin{figure*}[!hbtp]
   \centering
   {\includegraphics[width = 0.6 \textwidth, angle = -90]{./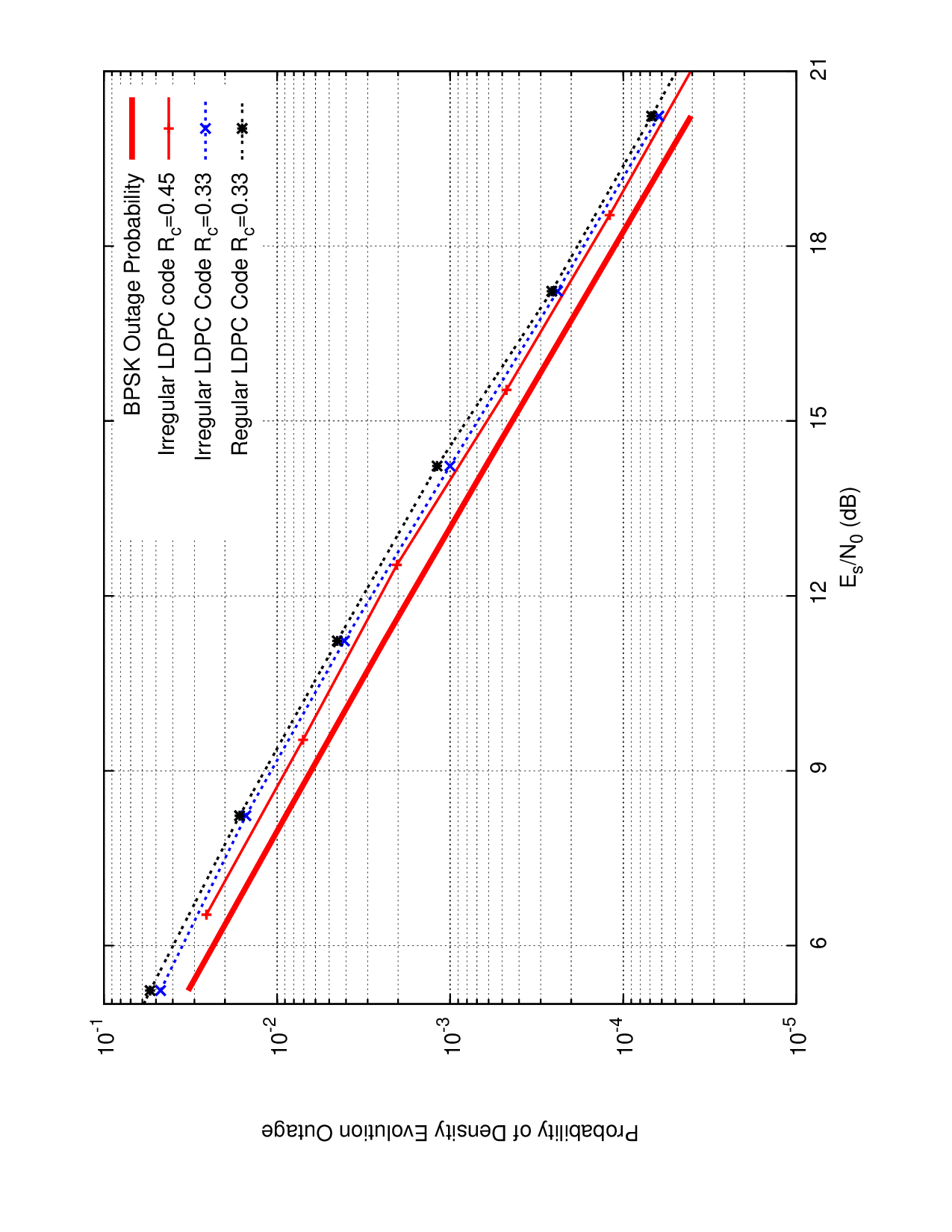}}
   \caption{Density Evolution Outage probability of RCR-LDPC codes with coding rates $R_c = \frac{1}{3}$ (scenario 1) and $R_c=0.45$ (scenario 2) with iterative decoding on a cooperative MAC with two users. $E_s/N_0$ is the average symbol energy-to-noise ratio on the source-destination link.}
	\label{fig: rate=0.333, beta=0.5, interuser=5}
\end{figure*}
Note that the outage probability for both rates is, by coincidence, too close to distinguish. The simulated RCR-LDPC code ensembles all perform within $1.5 \textrm{dB}$ from the outage probability limit, whereas the irregular  RCR-LDPC code ensembles are within $1 \textrm{dB}$ from  the outage probability  limit.  This  distance is respected for many variations of the channel conditions, such as other interuser  channel conditions  or uplink channel conditions. Note that our code construction can be applied on a full-duplex channel, doubling the overall spectral efficiency. As mentioned before, the coding rate is  adjustable by varying the number of parity bits $p_1'$ and $p_2'$, which is illustrated in scenario 2. \\

In this work, we mainly focussed on the diversity order achieved by the code construction. In more recent work \cite{duyck2010uld} we optimized the degree distribution using the analysis of Section \ref{sec: density evolution}. Another method is based on density evolution with a modified Gaussian approximation that takes into account the SNR variation in one received codeword as well as the rate-compatibility constraint \cite{li2008lcd}.

\subsection{Finite Length LDPC Codes}
It is interesting to evaluate the finite length performance of the proposed RCR-LDPC codes. Not only to approve the asymptotic performance, but also to see how to generate an instance of the parity-check matrix, given by Fig. \ref{fig: rate-compatible full-diversity H}. Before showing the results, we will first discuss the practical generation of this parity-check matrix. 

Consider case 1 from Fig. \ref{fig: 4_cases}. For the decoding process, the destination will apply the sum-product algorithm on the overall graph including $H_{1s}$,  $H_{1r}$, and  $H_2$. For the encoding process, it is easier to determine the parity bits $p_1'$, $p_2'$, and $(1p, 2p)$ with the parity-check matrices  $H_{1s}$,  $H_{1r}$, and  $H_2$ respectively. As with standard LDPC encoding, these matrices will then be systemized to determine the parity bits. An important constraint for the decoding process is the alignment in the overall parity-check matrix of common bit nodes in both constituent codes. This can be achieved by prohibiting column permutations during the systemization of $H_{1s}$,  $H_{1r}$ and  $H_2$. Except for case 4, which only decodes on $H_{1s}$, the other cases need the same constraints. \\

\subsubsection{Generation of $H_{1s}$ and $H_{1r}$}
$H_{1s}$ and $H_{1r}$ are randomly generated satisfying the degree distribution $\rho_1(x)$ for its rows and the degree distribution $\lambda_1(x)$ for its columns. A sufficient condition to prohibit column permutations during the systemization of $H_{1s}$ and $H_{1r}$ is imposing on $H_{p_1'}$ and $H_{p_2'}$ to be full-rank. $H_{p_1'}$  ($H_{p_2'}$ respectively) is  the most  right square  matrix of  $H_{1s}$ ($H_{1r}$  respectively). \\

\subsubsection{Generation of $H_2$}
The generation  of $H_2$ can be  split in the  generation of $H_{4c}$ and  $H_{3c}$,   where  $H_{3c}$   ($H_{4c}$  resp.)  is the upper  part (resp.  lower part)  of  the parity-check  matrix  $H_2$. $H_{3c}$  is the concatenation of an identity matrix (permutation matrix), zeros and a randomly generated matrix $(H_{2i}, H_{2p})$. The rows of $(H_{2i}, H_{2p})$ satisfy the degree distribution $\tilde{\rho}_2(x)$, the columns of the most left square matrix $H_{2i}$ satisfy the degree distribution $\tilde{\lambda}_2(x)$ and the columns of the most right square matrix $H_{2p}$ satisfy the degree distribution $\lambda_2(x)$. This is equivalent to generating a random graph with two classes of bitnodes at the left side and one class of checknodes at the right side of the graph. If $n_{3c}$ is the number of checknodes at the right side, then a random graph with $\frac{n_{3c}}{\sum_i\tilde{\rho}_{2i}}$ edges is generated. A fraction $\frac{\sum_i \tilde{\rho}_{2i}/i}{\sum_i \tilde{\lambda}_{2i}/i}$ of the edges is connected to bit nodes of the class $2i$, whereas a fraction $\frac{\sum_i \tilde{\rho}_{2i}/i}{\sum_i \lambda_{2i}/i}$ of the edges is connected to bit nodes of the class $2p$. In the end, the identity matrix is simply added. $H_{4c}$ is generated similarly. \\
For the encoding process, we have to systemize this matrix. One solution is to switch the columns associated with the $1i$ bit node class and the $2p$ bit node class. The most left square matrix of $H_2$ will then be block-diagonal with $H_{2p}$ and $H_{1p}$ on its diagonal. Having  $H_{2p}$  and   $H_{1p}$  full-rank  is consequently a  sufficient condition to exclude column permutations during the systemization of this matrix. After the generation of $(2p, 1p)$, all the bits are put in the required order $1i-1p-2i-2p$ by switching back the bits of the classes $1i$ and $1p$.\\

\subsubsection{WER performance of finite length LDPC codes}
The probability of Density Evolution Outage $P_{DEO}$ is a lower bound of the WER of LDPC ensembles without cycles in its Tanner graph, which is illustrated in Fig. \ref{fig: irregular finite length} for irregular codes and in Fig. \ref{fig: finite_LDPC_1} for the regular code of scenario 1. In the latter, we augment the blocklength to show that the WER of LDPC codes is independent of the block length. The results shows that inequality (\ref{DEO lowerbound}) is very tight in this case.

\begin{figure}[!hbtp]
   \centering
   {\includegraphics[width = 0.33 \textwidth, angle = -90]{./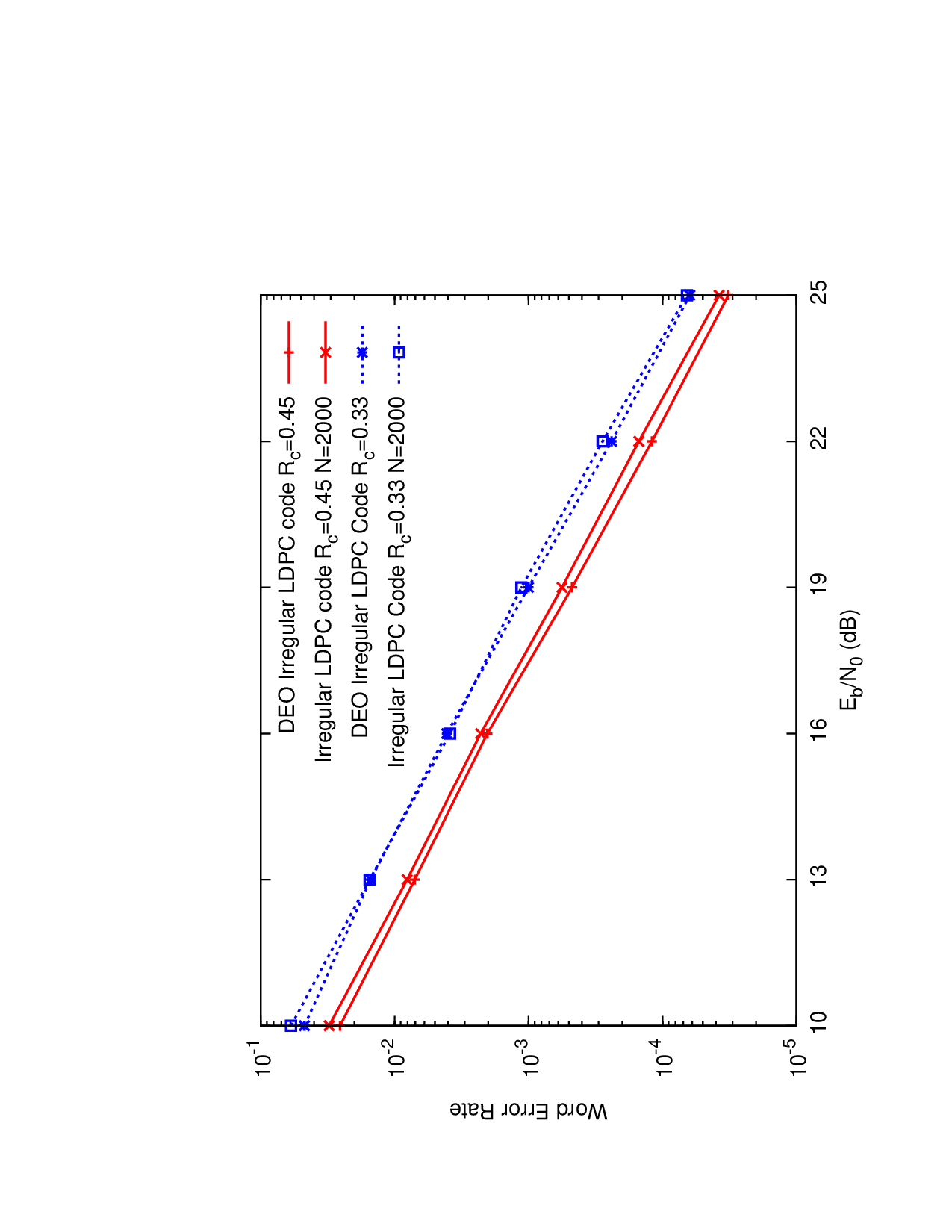}}
   \caption{Comparison of Density Evolution Outage (DEO) probability of irregular RCR-LDPC codes with coding rates $R_c = \frac{1}{3}$ (scenario 1) and $R_c=0.45$ (scenario 2) with iterative decoding on a cooperative MAC with two users. $E_b/N_0$ is the average information bit energy-to-noise ratio on the source-destination link.}
	\label{fig: irregular finite length}
\end{figure}

\begin{figure}[!hbtp]
   \centering
   {\includegraphics[width = 0.33 \textwidth, angle = -90]{./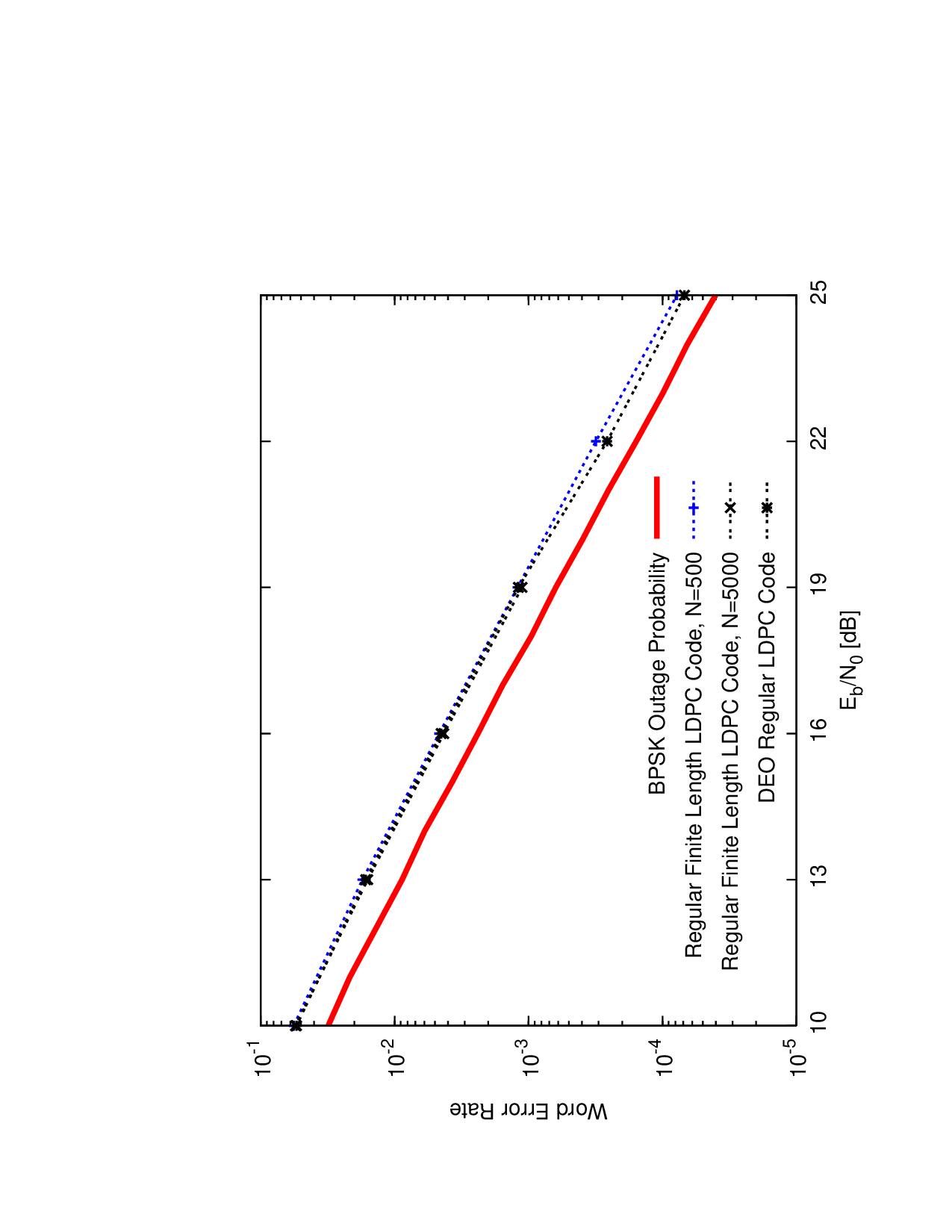}}
   \caption{Comparison of RCR-LDPC codes for different block lengths with iterative decoding on a cooperative MAC for two users, coding rate $R_c=1/3$. The ratio $E_b/N_0$ is the average information bit energy-to-noise ratio on the source-destination link.}
	\label{fig: finite_LDPC_1}
\end{figure}

\subsection{Comparison with Previous Work}
As mentioned in the introduction, especially rate-compatible punctured convolutional codes (RCPC) have been used in coded cooperation. The main drawback of these codes is that the WER increases with the logarithm  of the block length to the power $d$  where $d$ is the diversity  order \cite{bou2004tcd}, \cite{bou2005aoc}, whereas the WER of near-outage codes should  be independent of the block  length.  This can be seen clearly on Fig. \ref{fig: finite_LDPC_2}, where we show the WER of two rate-compatible non-recursive non-systematic (75,53,47) convolutional codes with block length 500 and 5000 respectively. We used the same channel conditions and coding rate as in scenario 1. 

We also compared with another protocol, Decode and Forward (DF), using near-outage LDPC codes for this protocol. Despite the fact that this implementation has near-outage performance, the WER performance is worse than that of our code construction. The reason is that the outage probability limit of DF is higher than that of coded cooperation.

\begin{figure}[!hbtp]
   \centering
   {\includegraphics[width = 0.33 \textwidth, angle = -90]{./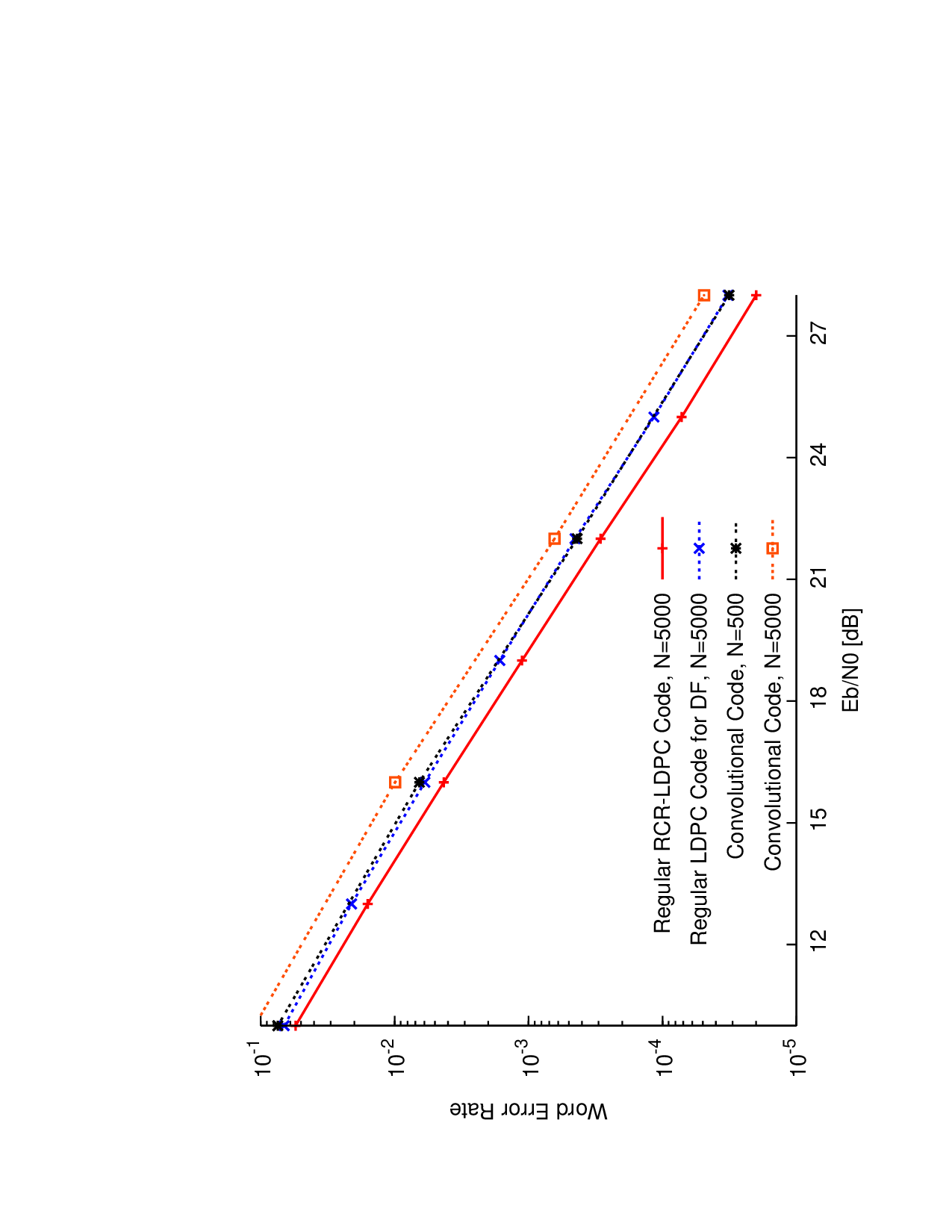}}
   \caption{Comparison of RCR-LDPC codes for coded cooperation with other work on a cooperative MAC for two users. We simulated LDPC codes for Decode and Forward under iterative decoding and an implementation of rate-compatible convolutional codes \protect\cite{hunter2004cc}. The ratio $E_b/N_0$ is the average information bit energy-to-noise ratio on the source-destination link.}
	\label{fig: finite_LDPC_2}
\end{figure}

\subsection{Comparison with fully random LDPC codes}

Finally, a comparison with random LDPC codes is made. In Sec. \ref{sec: Rate-compatible full-diversity LDPC codes}, the global parity-check matrix is obtained by  embedding  the root-LDPC  matrix  (Fig.  \ref{fig:  parity-check full-diversity})  into  the  rate-compatible  matrix  (Fig.  \ref{fig: parity1}). When using codes that are fully random generated, i.e., no special rootchecks are used, then the global parity-check matrix is obtained by  embedding a random LDPC  matrix into  the  rate-compatible  matrix  (Fig.  \ref{fig: parity1}), see Fig. \ref{fig: random}, where $H_1$ and $H_2$ are randomly generated. 

\begin{figure}[!ht]
   \centering
   {\includegraphics[width = 0.45 \textwidth]{./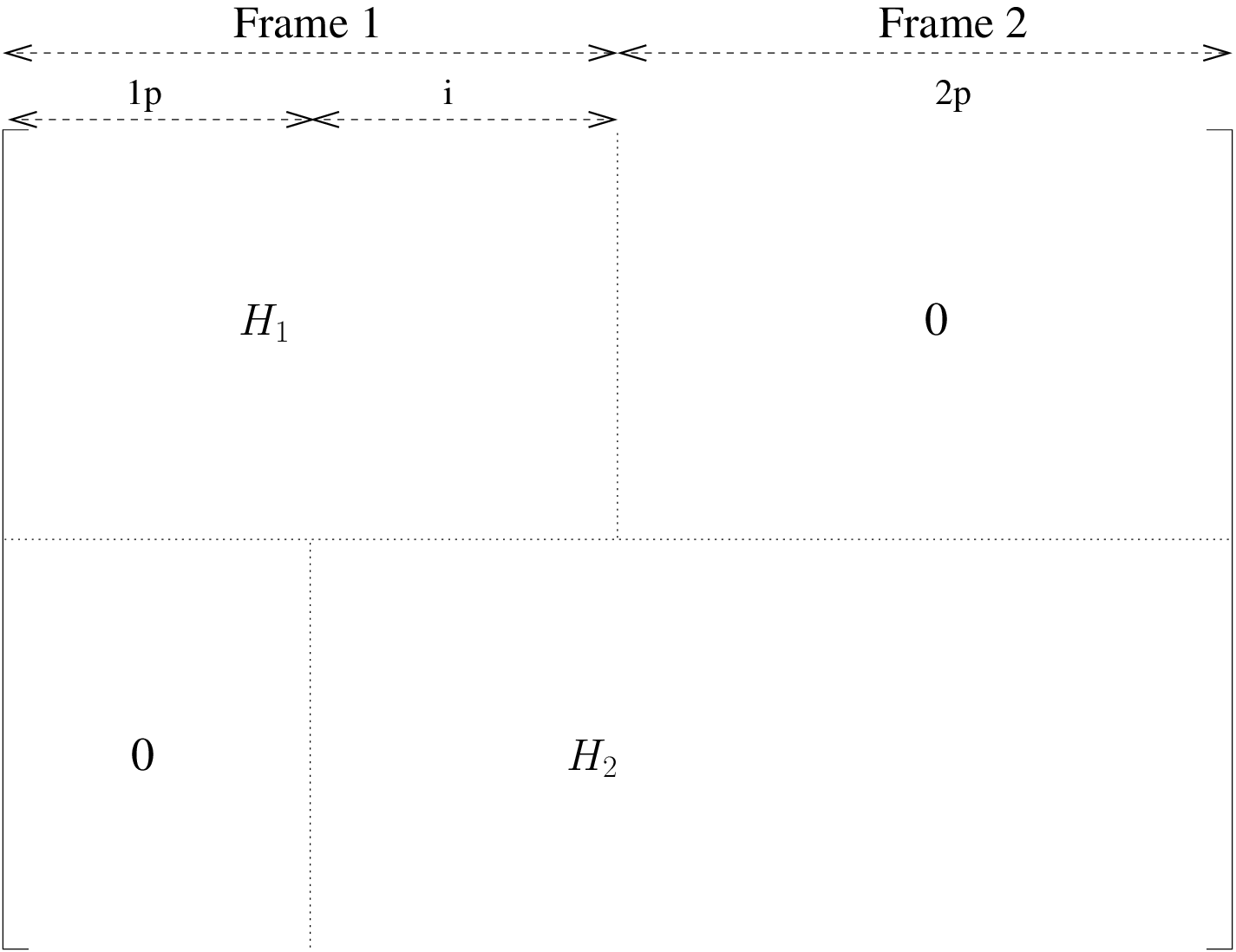}}
   \caption{Parity-check matrix of a rate-compatible LDPC code obtained by the extension of higher rate codes. Symbols are split into three classes: $i$ for the information bits, $1p$ and $2p$ for two classes of parity bits. The classes $i$ and $1p$ are transmitted by the source in frame 1. Parity bits $2p$ are transmitted in the second frame, for example by the relay after successful decoding of the first frame. Matrix $H_1$ is used to protect the information bits on the source channel. The parity bits generated by the relay provide an extra protection through the code $H_2$.}
	\label{fig: random}
\end{figure}

We simulated the same scenarios from the previous subsections, using the same code for $H_1$ and using the degree distribution of previously published excellent LDPC codes for the Gaussian channel for the random generation of $H_2$.

\subsubsection{Scenario 1}
{\small
\begin{eqnarray*}
	\lambda_2(x) &=& 0.189 x + 0.177 x^2 + 0.136 x^4 + 0.126 x^5 + 0.027 x^6 \\ 
	&& + 0.037 x^{11} +  0.006 x^{13} + 0.076 x^{21} + 0.225 x^{28}, \\
	\rho_2(x) &=& 0.153 x^4 +	0.125 x^5 +  0.040 x^6 	+ 0.261 x^7 \\
	&& + 0.149 x^8 + 0.178 x^9 + 0.041 x^{10} +	0.055 x^{11},
\end{eqnarray*}}
where the coding rate of $(\lambda_2(x),\rho_2(x))$ is $R_{c2}=0.4$, so that the overall coding rate is $R_c = 1/3$. The comparison with a regular $(3,9,3,6)$ RCR-LDPC code is shown in Fig. \ref{fig: finite_LDPC_random1}.

\begin{figure}[!htb]
   \centering
   {\includegraphics[width = 0.33 \textwidth, angle = -90]{./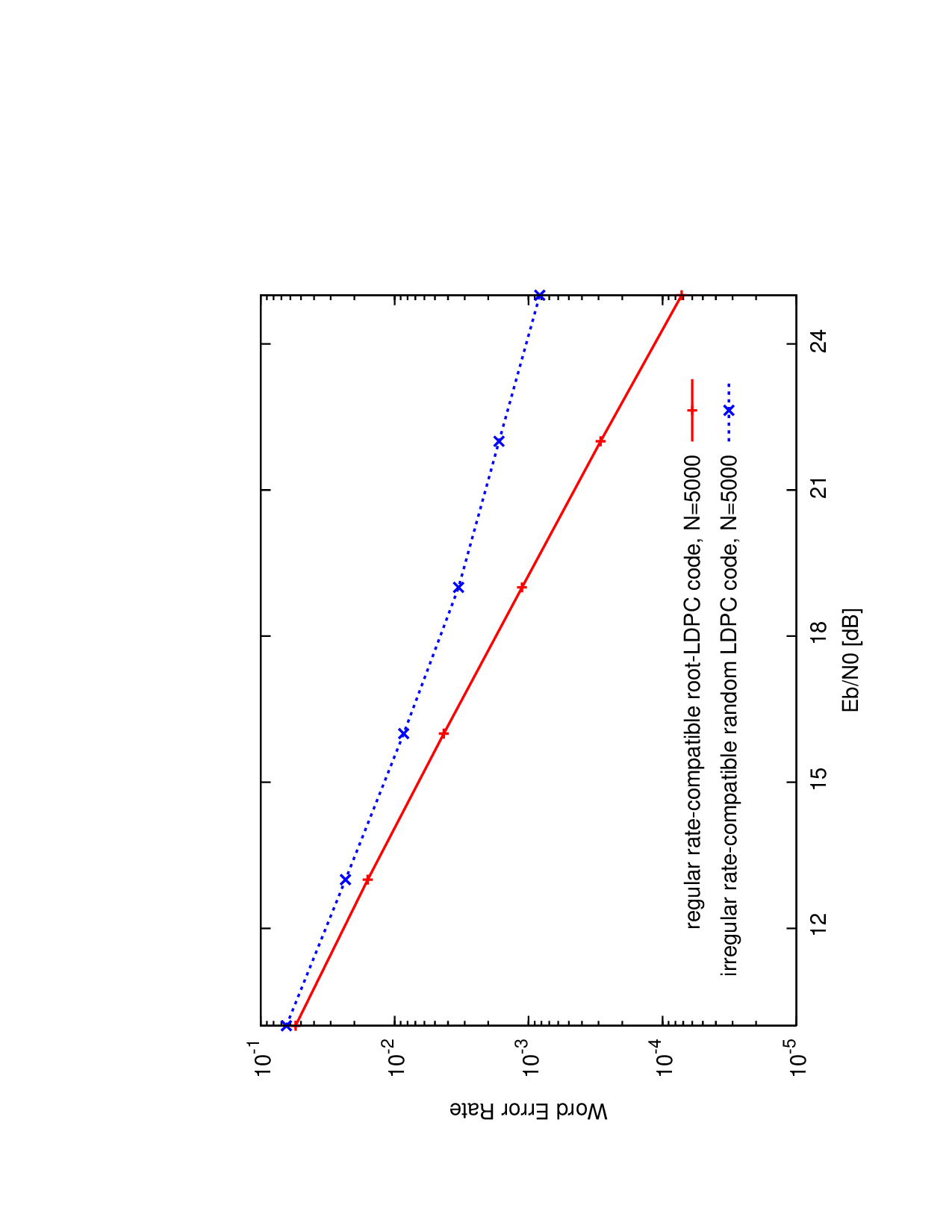 }}
   \caption{Comparison of RCR-LDPC codes with rate-compatible random LDPC codes for coded cooperation on a cooperative MAC for two users, coding rate $R_c=1/3$. The ratio $E_b/N_0$ is the average information bit energy-to-noise ratio on the source-destation link.}
	\label{fig: finite_LDPC_random1}
\end{figure}

\subsubsection{Scenario 2}
{\small
\begin{eqnarray*}
	\lambda_2(x) &=& 0.230 x + 0.164 x^2 + 0.149 x^5 +  0.126  x^6 + 0.027 x^7 \\
	&& + 0.037 x^{15} + 0.006 x^{16} + 0.243  x^{17} +  0.018  x^{23} , \\
	\rho_2(x) &=& 0.153  x^{5} + 0.425  x^{7} + 0.149 x^{8} + 0.273 x^{9}  ,
\end{eqnarray*}}
where the coding rate of $(\lambda_2(x),\rho_2(x))$ is $R_{c2}=9/19$, so that the overall coding rate is $R_c = 0.45$. The comparison with an irregular RCR-LDPC code is shown in Fig. \ref{fig: finite_LDPC_random2}.

\begin{figure}[!htb]
   \centering
   {\includegraphics[width = 0.33 \textwidth, angle = -90]{./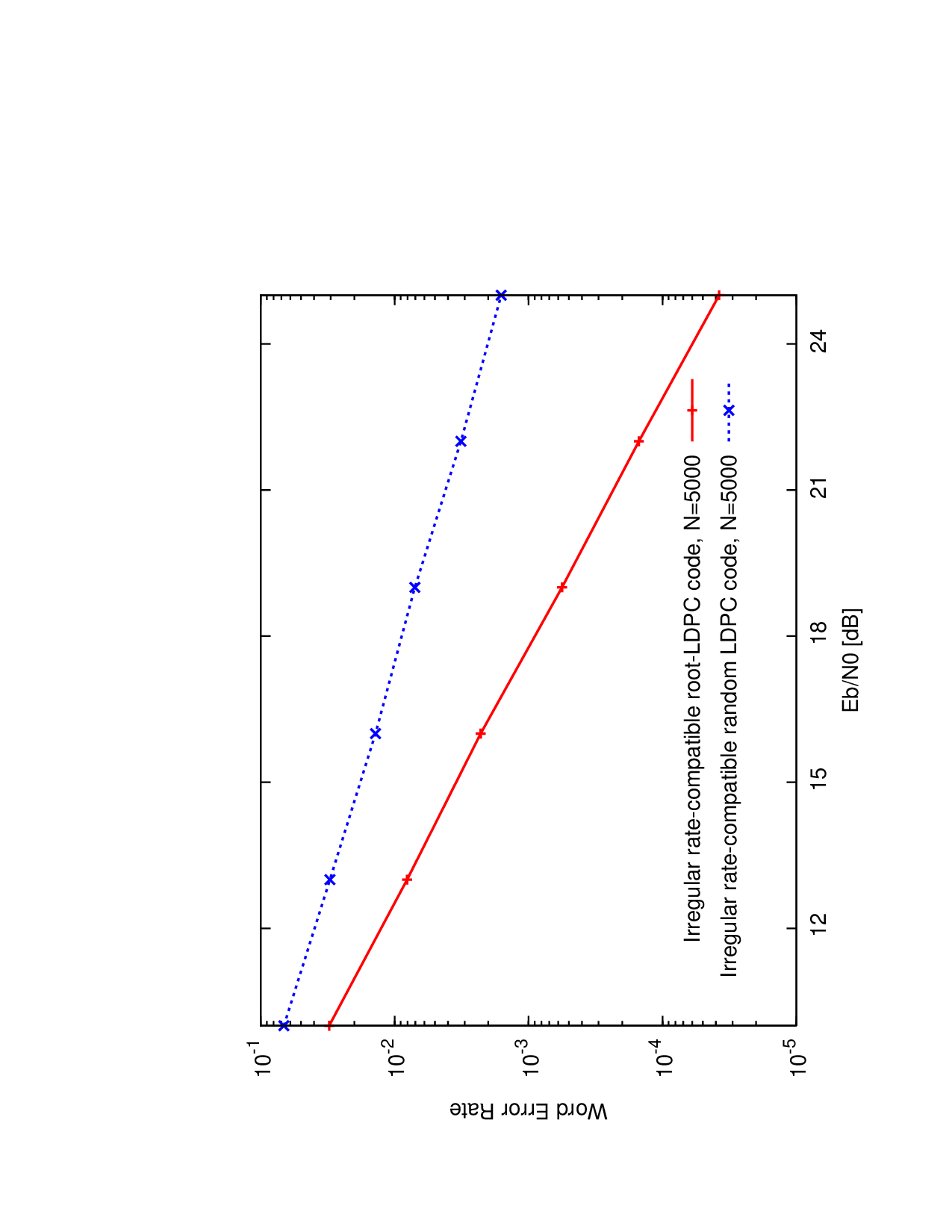 }}
   \caption{Comparison of RCR-LDPC codes with rate-compatible random LDPC codes for coded cooperation on a cooperative MAC for two users, coding rate $R_c=0.45$. The ratio $E_b/N_0$ is the average information bit energy-to-noise ratio on the source-destination link.}
	\label{fig: finite_LDPC_random2}
\end{figure}

In scenario 1, the threshold of $(\lambda_2(x),\rho_2(x))$ is $E_b/N_0 = 0.1$dB which is $0.338$dB from the Shannon limit; and in scenario 2, the threshold of $(\lambda_2(x),\rho_2(x))$ is $E_b/N_0 = 0.4$dB which is $0.33$dB from the Shannon limit. Despite the excellent thresholds of the codes in both scenarios, full-diversity is not achieved. From these two examples, it is clear that rootchecks are necessary to have full-diversity. 

\section{Conclusion}
We  have studied  LDPC codes  for relay  channels in  a slowly varying
fading environment  under iterative  decoding. We have  introduced the
new  family of  rate-compatible  root-LDPC codes,  which combines  the
rate-compatibility property  with the full-diversity  property for any
coding rate $R_c \leq R_{cmax} = \textrm{min}(\beta, 1-\beta)$, where $\beta$ is the
cooperation level.  Through a density evolution analysis and finite length simulations, we have shown  that the error rate  performance of
regular and irregular rate-compatible  root-LDPC codes is close to the outage probability  limit and this occurs for all block lengths  (finite and
infinite) and  all rates not exceeding $R_{cmax}$.  Its flexibility and  high performance makes rate-compatible  root-LDPC   attractive  for wireless cooperative communications scenarios with slowly varying fading.

\small

\vspace{1.5 cm}

\noindent \textbf{Dieter Duyck} (S'09) received the M.S. degree in electrical engineering in 2007 from the Katholieke Universiteit Leuven (KUL), Leuven, Belgium. In 2006, he spent one year with the Communications and Electronics Department, at the Ecole Nationale Sup\'erieure des T\'el\'ecommunications (ENST, Telecom ParisTech), Paris, France. In 2007, he started his Ph.D. research at the Department of Telecommunications and Information Processing (TELIN), Ghent University, Gent, Belgium. 

From Oct. 2007 until present, he conducted his Ph.D. research. He has held visiting appointments with Ecole Nationale Sup\'erieure des T\'el\'ecommunications (ENST), Paris, France; and Texas A\&M University at Qatar, Doha, Qatar. His research interests are in communication theory, information theory, channel coding, joint network-channel coding, digital modulation and space-time coding.

M. Sc. Dieter Duyck received the first Young Researcher Award for \cite{duy2009fjncc} awarded by the Award Committee, formed by all Advisory Board Members of the European Newcom++ (Network of Excellence in Wireless COMmunications). He also received the best student paper award at the IEEE Symposium on Communications and Vehicular Technology in the Benelux (SCVT) in 2010. \\

\vspace{1.5 cm}

\noindent \textbf{Joseph Jean Boutros} (M'94, SM'09) received the M.S. degree in electrical engineering in 1992 and the Ph.D. degree in 1996, both from Ecole Nationale Sup\'erieure des T\'el\'ecommunications (ENST, Telecom ParisTech), Paris, France. From 1996 to 2006, he was with the Communications and Electronics Department, ENST, as an Associate Professor. He was also a member of the research unit UMR-5141 of the French National Scientific Research Center (CNRS). In 2007, he joined Texas A\&M University at Qatar (TAMUQ) as a full Professor in the electrical engineering program. He has been a scientific consultant for Alcatel Espace, Philips Research, and Motorola Semiconductors, and was a member of the Digital Signal Processing team of Juniper Networks Cable. His fields of interest are codes on graphs, iterative decoding, joint source-channel coding, space-time coding, and lattice sphere packings.\\

\vspace{1.5 cm}

\noindent \textbf{Marc Moeneclaey} (M'93, SM'99, F'02) received the diploma of electrical engineering and the Ph.D. degree in electrical engineering from Ghent University, Gent, Belgium, in 1978 and 1983, respectively. 

He is Professor at the Department of Telecommunications and Information Processing (TELIN), Gent University. His main research interests are in statistical communication theory, (iterative) estimation an detection, carrier and symbol synchronization, bandwidth-efficient modulation and coding, spread-spectrum, satellite and mobile communication. He is the author of more than 400 scientific papers in international journals and conference proceedings. Together with Prof. H. Meyr (RWTH Aachen) and Dr. S. Fechtel (Siemens AG), he co-authors the book Digital communication receivers – Synchronization, channel estimation, and signal processing. (J. Wiley, 1998). He is co-recipient of the Mannesmann Innovations Prize 2000.

During the period 1992-1994, he was Editor for Synchronization, for the IEEE Transactions on Communications. He served as co-guest editor for special issues of the Wireless Personal Communications Journal (on Equalization and Synchronization in Wireless Communications) and the IEEE Journal on Selected Areas in Communications (on Signal Synchronization in Digital Transmission Systems) in 1998 and 2001, respectively.


\begin{thebibliography}{99}

\bibitem{aza2005ota} K. Azarian, H. El Gamal, and P. Schniter,
``On the achievable diversity-multiplexing tradeoff in half-duplex cooperative channels,''
{\em IEEE Trans. Inf. Theory}, vol.~51, no.~12, pp.~4152-4172, Dec. 2005.

\bibitem{bao2005daf} X. Bao and J. Li,
``Decode-amplify-forward (DAF): a new class of forwarding strategy for wireless relay channels,''
{\em Proc. IEEE SPAWC}, New York, pp. 816-820, June 2005.

\bibitem{biglieri1998fci} E. Biglieri, J. Proakis, S. Shamai,
``Fading channels: information-theoretic and communications aspects,''
{\em IEEE Trans. Inf. Theory}, vol.~44, no.~6, pp. 2619-2692, Oct. 1998.

\bibitem{biglieri2003cwc} E. Biglieri,
{\em Coding for wireless channels}, New York, Springer-Verlag, 2005.

\bibitem{boutros2007daa} J.J. Boutros, A. Guill\'en i F\`abregas, E. Biglieri, and G. Z\'emor,
``Design and Analysis of Low-Density Parity-check Codes for Block-Fading Channels,''
{\em IEEE Information Theory and Applications Workshop}, pp. 54-62, Jan. 2007, DOI:~10.1109/ITA.2007.4357562.

\bibitem{bou2004tcd} J.J. Boutros, E. Calvanese, and A. Guill\'en i F\`abregas,
``Turbo code design for block fading channels,''
{\em Allerton Conf. on Communication and Control}, Illinois, 2004.

\bibitem{bou2005aoc} J.J. Boutros, A. Guill\'en i F\`abregas, and E. Calvanese,
``Analysis of coding on non-ergodic block fading channels,''
{\em Allerton Conf. on Communication and Control}, Illinois, 2005.

\bibitem{bou2007dp} J.J. Boutros, A. Guill\'en i F\`abregas, E. Biglieri, and G. Z\'emor,
``Low-Density Parity-check Codes for Nonergodic Block-Fading Channels,''
{\em IEEE Trans. Inf. Theory}, vol.~56, no.~9, pp. 4286-4300, Sep. 2010, DOI:~10.1109/TIT.2010.2053890.

\bibitem{chak2007ldp} A. Chakrabarti, A. de Baynast, A. Sabharwal, and B. Aazhang,
``Low-density parity-check codes for the relay channel,'' 
{\em IEEE journal on selected areas in communications}, vol. 25, no. 2, Feb. 2007.

\bibitem{cover2006eit} T.M. Cover and J.A. Thomas,
{\em Elements of Information Theory}, New York, Wiley, 2006.

\bibitem{cover1979ctr} T. Cover and A.E. Gamal,
``Capacity theorems for the relay channel,''
{\em IEEE Trans. Inf. Theory}, vol.~IT-25, no.~5, pp. 572-584, Sep. 1979.

\bibitem{duyck2010uld} D. Duyck, M. Azmi, J. Yuan, J.J. Boutros, M. Moeneclaey,
``Universal LDPC codes for Cooperative Communications,''
{\em Turbo Code Symposium}, Brest, France, Sep. 2010.

\bibitem{duy2009fjncc} D. Duyck, D. Capirone, J.J. Boutros, and M. Moeneclaey,
``A full-diversity joint network-channel code construction for cooperative communications,''
{\em Personal Indoor Mobile Radio Communications (PIMRC)}, Tokyo, Japan, Sept. 2009.

\bibitem{duyck2010aac} D. Duyck, D. Capirone, J.J. Boutros, and M. Moeneclaey,
``Analysis and construction of full-diversity joint network-LDPC codes for cooperative communications,''
{\em Eur. Journal on Wireless Comm. and Netw.}, vol. 2010, Article ID 805216, 16 pages, 2010. 

\bibitem{duy2011tfdj} D. Duyck, D. Capirone, M. Heindlmaier, and M. Moeneclaey,
``Towards full-diversity joint network-channel coding for large networks,''
{\em European Wireless Conference (EWC)}, Vienna, Austria, April 2011.

\bibitem{fabregas2004ccb} A. Guill\'en i F\`abregas,
{\em Concatenated codes for block-fading channels}, Ph.D. thesis, EPFL, June 2004.

\bibitem{fabregas2006cmb} A. Guill\'en i F\`abregas, and G. Caire,
``Coded modulation in the block-fading channel: coding theorems and code construction,''
{\em IEEE Trans. Inf. Theory}, vol.~52, no.~1, pp. 91-114, Jan. 2006.

\bibitem{ha2004rcp} J. Ha, J. Kim, and S.W. McLaughlin,
``Rate-Compatible Puncturing of Low-Density Parity-check Codes,''
{\em IEEE Trans. Inf. Theory}, vol.~50, no.~11, pp. 2824-2836, Nov. 2004.

\bibitem{hagenauer1988rcp} J. Hagenauer,
``Rate-compatible punctured convolutional codes (RCPC codes) and their applications,''
{\em IEEE Trans. Commun.}, vol.~36, no.~4, pp. 389-400, Apr. 1988, DOI:~10.1109/26.2763.

\bibitem{hu2007ldp} J. Hu and T.M. Duman,
``Low Density Parity Check Codes over Wireless Relay Channels,''
{\em IEEE Trans. Wireless Commun.}, vol.~6, no.~9, pp. 3384-3394, Sep. 2007, DOI:~10.1109/TWC.2007.06083.

\bibitem{hunter2004cc} T.E. Hunter,
{\em Coded cooperation: a new framework for user cooperation in wireless systems}, Ph.D. thesis, University of Texas at Dallas, 2004.

\bibitem{hunter2006dtc}  A. Nosratinia and T.E. Hunter,
``Diversity through coded cooperation,''
{\em IEEE Trans. Wireless Commun.}, vol. 5, no. 2, pp. 283-289, Feb. 2006, DOI:~10.1109/TWC.2006.1611050.

\bibitem{hunter2006oac} T.E. Hunter, S. Sanayei, and A. Nosratinia,
``Outage analysis of coded cooperation,''
{\em IEEE Trans. Inf. Theory}, vol.~52, no.~2, pp. 375-391, Feb. 2006.

\bibitem{Jin2005bei} H. Jin and T. Richardson,
``Block error iterative decoding capacity for LDPC codes,''
{\em IEEE International Symp. on Inf. Theory}, Adelaide, Sep. 2005.

\bibitem{knopp2000cbf} R. Knopp and P.A. Humblet,
``On coding for block fading channels,''
{\em IEEE Trans. Inf. Theory}, vol.~46, no.~1, pp. 189-205, Jan. 2000.

\bibitem{laneman2004cdw} J.N. Laneman, D. Tse, and G.W. Wornell,
``Cooperative diversity in wireless networks: Efficient protocols and outage behavior,''
{\em IEEE Trans. Inf. Theory}, vol.~50, no.~12, pp. 3062-3080, Dec. 2004.

\bibitem{lapidoth1994pcc} A. Lapidoth,
``The performance of convolutional codes on the block erasure channel using various finite interleaving techniques,''
{\em IEEE Trans. Inf. Theory}, vol.~40, no.~5, pp. 1459-1473, Sep. 1994, DOI:~10.1109/18.333861.

\bibitem{li2008lcd} C. Li, G. Yue, X. Wang, and M.A. Khojastepour,  
``LDPC Code Design for Half-duplex Cooperative Relay,''
{\em IEEE Trans. on Wireless Commun.}, vol.~7, no.~11, pp.~4558-4567, Nov. 2008, DOI:~10.1109/T-WC.2008.070482.

\bibitem{li2002rcl} J. Li and K. Narayanan,
``Rate-compatible low density parity check codes for capacity-approaching ARQ scheme in packet data communications,''
{\em Proc. Int. Conf. Commun. Internet Inf. Technol.}, pp. 201-206, Nov. 2002.

\bibitem{malkamaki1999epc}  E. Malkam\"aki and H. Leib,
``Evaluating the performance of convolutional codes over block fading channels,''
{\em IEEE Trans. Inf. Theory}, vol. 45, no. 5, pp. 1643-1646, Jul. 1999.

\bibitem{meulen1971ttc} E.C. van der Meulen, 
``Three-Terminal Communication Channels,''
{\em Adv. Appl. Prob.}, vol. 3, no. 1, pp. 120-154, 1971.

\bibitem{nosratinia2004ccw}  A. Nosratinia, T.E. Hunter, and A. Hedayat,
``Cooperative communication in wireless networks,''
{\em IEEE Commun. Mag.}, vol. 42, no. 10, pp. 74-80, Oct. 2004.

\bibitem{ozarow1994itc} L.H. Ozarow, S. Shamai and A.D. Wyner,
``Information theoretic considerations for cellular mobile radio,''
{\em IEEE Trans. Veh. Technol.}, vol.~43, no.~2, pp. 359-379, May 1994, DOI:~10.1109/25.293655.

\bibitem{razaghi2007bld} P. Razaghi and W. Yu,
``Bilayer Low-Density Parity-Check Codes for Decode-and-Forward in Relay Channels,''
{\em IEEE Trans. Inf. Theory}, vol.~53, no.~10, pp. 3723-3739, Oct. 2007, DOI:~10.1109/TIT.2007.904983.

\bibitem{razaghi2007bic} P. Razaghi and W. Yu,
``Bit-Interleaved Coded Modulation for the Relay Channel Using Bilayer LDPC Codes,''
{\em 10th Canadian Workshop on Information Theory}, Edmonton, Alberta, Canada, June 6-8, 2007.

\bibitem{richardson2001cld} T.J. Richardson and R.L. Urbanke,
``The capacity of low-density parity-check codes under message-passing decoding,''
{\em IEEE Trans. Inf. Theory}, vol.~47, no.~2, pp. 599-618, Feb. 2001.

\bibitem{richardson2008mct} T.J. Richardson and R.L. Urbanke,
{\em Modern Coding Theory}, Cambridge Univ. Press, Cambridge, U.K., 2008.

\bibitem{richardson2001dca} T.J. Richardson, M.A. Shokrollahi, and R.L. Urbanke,
``Design of capacity-approaching irregular low-density parity-check codes,''
{\em IEEE Trans. Inf. Theory}, vol.~47, no.~2, pp. 619-637, Feb. 2001.

\bibitem{richardson2004met} T.J. Richardson and R.L. Urbanke,
``Multi-Edge Type LDPC Codes,''
{\em IEEE Trans. Inf. Theory}, submitted 2004.

\bibitem{sendonaris2003ucd}  A. Sendonaris, E. Erkip, and B. Aazhang,
``User cooperation diversity-Part I: System description,''
{\em IEEE Trans. Commun.}, vol. 51, no. 11, pp. 1927-1938, Nov. 2003, DOI:~10.1109/TCOMM.2003.818096.

\bibitem{sendonaris2003ucd2}  A. Sendonaris, E. Erkip, and B. Aazhang,
``User cooperation diversity-Part II: Implementation aspects and performance analysis,''
{\em IEEE Trans. Commun.}, vol. 51, no. 11, pp. 1939-1948, Nov. 2003.


\bibitem{thorpe2003lcp}  J. Thorpe,
``Low-Density Parity-Check (LDPC) codes constructed from protographs,''
{\em JPL INP Progress Report}, vol. 42-154, pp. 1-7, Aug. 2003.

\bibitem{tse2005fwc} D.N.C. Tse and P. Viswanath,
{\em Fundamentals of Wireless Communication}, Cambridge Univ. Press, Cambridge, U.K., May 2005.

\bibitem{ungerboeck1982ccm} G. Ungerboeck,
``Channel coding with multilevel/phase signals,''
{\em IEEE Trans. Inf. Theory}, vol. IT-28, no. 1, pp. 55-67, 1982.

\bibitem{yazdani2004irc} M. Yazdani and A.H. Banihashemi,
``Irregular rate-compatible LDPC codes for capacity-approaching hybrid-ARQ schemes,''
{\em Canadian Conf. on Electrical and Computer Engineering}, 2004.

\bibitem{zhao2003sna} B. Zhao and M.C. Valenti,
``Some new adaptive protocols for the wireless relay channel,''
{\em Allerton conference on communication control and computing}, vol.~41, no.~3, pp. 1588-1589, 2003.


\end{thebibliography}
\end{document}